\pgfplotsset{compat=newest}
\pgfplotsset{plot coordinates/math parser=false}
\newcommand{\nice}{SYM{}}
\let\vec\bm
\newcommand{\diff}{\ensuremath{\mathrm{d}}}
\newcommand{\expect}[2][]{\ensuremath{\mathbb{E}_{#1}\left[#2\right]}}
\newcommand{\inv}[1]{\ensuremath{#1^{-1}}}
\newcommand{\abs}[1]{\ensuremath{\left|#1\right|}}
\newcommand{\positive}[1]{\ensuremath{\left[#1\right]^{+}}}
\newcommand{\reals}{\ensuremath{\mathds{R}}}
\newcommand{\complexes}{\ensuremath{\mathds{C}}}
\newcommand{\e}{\ensuremath{\mathrm{e}}}
\newcommand{\X}{\ensuremath{{X}}}
\newcommand{\Y}{\ensuremath{{Y}}}
\newcommand{\lx}{\ensuremath{{\lambda_{1}}}}
\newcommand{\ly}{\ensuremath{{\lambda_{2}}}}
\newcommand{\snr}{\ensuremath{\rho}}
\newcommand{\snrx}{\ensuremath{\rho_{1}}}
\newcommand{\snry}{\ensuremath{\rho_{2}}}
\newcommand{\zerorate}{\ensuremath{R^{0}}}
\newcommand{\epsrate}{\ensuremath{R^{\varepsilon}}}
\DeclareMathOperator*{\argmin}{arg\,min}
\DeclareMathOperator{\median}{median}
\DeclareMathOperator{\mode}{mode}
\theoremstyle{plain}%
\newtheorem{thm}{Theorem}%
\newtheorem{lem}[thm]{Lemma}
\newtheorem{cor}[thm]{Corollary}
\theoremstyle{definition}
\newtheorem{defn}{Definition}%
\theoremstyle{remark}
\newtheorem{rem}{Remark}
\newtheorem*{rem*}{Remark}
\definecolor{plot0}{HTML}{1bb34c}
\definecolor{plot1}{HTML}{242bb3}
\definecolor{plot2}{HTML}{b3120e}
\definecolor{plot3}{HTML}{cc9f2b}
\definecolor{plot4}{HTML}{27AFB3}
\definecolor{coolblack}{rgb}{0.0, 0.18, 0.39}
\definecolor{change}{HTML}{0096b8}
\title{On Fading Channel Dependency Structures with a Positive Zero-Outage Capacity}
\author{Karl-Ludwig Besser, \IEEEmembership{Student Member, IEEE}, Pin-Hsun Lin, \IEEEmembership{Member, IEEE}, and Eduard A. Jorswieck, \IEEEmembership{Fellow, IEEE}
	
\thanks{Parts of this work have been presented at the 54th Asilomar Conference on Signals, Systems, and Computers, November 2020~\cite{Besser2020asilomar}.}
\thanks{The authors are with the Institute of Communications Technology, Technische Universit\"at Braunschweig, 38106 Braunschweig, Germany (email: \{{k.besser}, {p.lin}, {e.jorswieck}\}@tu-bs.de).}
\thanks{The work of K.-L. Besser is supported in part by the German Research Foundation (DFG) under grant JO\,801/23-1. The work of E. Jorswieck is supported in part by DFG under grant JO\,801/25-1.}
}
\begin{document}
\maketitle

\begin{abstract}%
	With emerging wireless technologies like 6G, many new applications like autonomous systems evolve which have strict demands on the reliability and latency of data communications.
	In the scenario of the commonly investigated independent slow fading links, the zero-outage capacity (ZOC) is zero and retransmissions are therefore inevitable.
	In this work, we show that a positive ZOC can be achieved under the same setting of slow fading with constant transmit power and without perfect channel state information at the transmitter, if the joint distribution of the channel gains follows certain structures. This allows reliable reception without any outages, thus not requiring retransmissions.
	Based on a systematic copula approach, we show that there exists a set of dependency structures for which positive ZOCs can be achieved for both maximum ratio combining (MRC) and selection combining (SC). 
	We characterize the maximum ZOC within a finite number of bits.
	The results are evaluated explicitly for the special cases of Rayleigh fading and Nakagami-$m$ fading in order to quantify the ZOCs for common fading models.
\end{abstract}
\begin{IEEEkeywords}
Copula, Joint distributions, Fading channels, Delay-limited capacity, Outage probability.
\end{IEEEkeywords}

\section{Introduction}
Current visions for emerging technologies like 6G include new applications like autonomous systems which have extremely high reliability constraints~\cite{Saad2019,Park2020}. It is therefore of great interest, how these strict reliability constraints can be achieved.

For slow-fading channels, a common performance metric is the $\varepsilon$-outage capacity.
It is defined as the maximum transmission rate for which the outage probability is not greater than $\varepsilon$.
Of particular interest is the \gls{zoc}, i.e., the maximum transmission rate at which we can transmit \emph{without any} outages.
This will be especially relevant in the context of \gls{urllc}~\cite{Park2020,Bennis2018,Ding2021urllc}.

It is well-known that the \gls{zoc} is zero in most of the common scenarios considered in the literature, e.g., single link \gls{siso} with independent Rayleigh fading~\cite{Tse2005}.
In \cite{Zhu2019}, it is shown that a negative correlation improves the outage probability compared to independent channels.
The authors consider the special scenario of a dual-branch system with linearly correlated log-normal fading channels.
A similar observation is made in \cite{Zhu2018}.
Furthermore, it was shown in \cite{Jorswieck2019}, that the \gls{zoc} can be positive for two Rayleigh fading links without perfect \gls{csit}, if the two channels are negatively correlated.
Taking more general dependency structures into account, bounds on the $\varepsilon$-outage capacity for communication systems with $n$ dependent fading links are derived in \cite{Besser2020twc}. It is shown that the upper-bound on the $\varepsilon$-outage capacity, i.e., for the best-case joint distribution, is strictly positive.
This implies that there exists at least one joint distribution for which the \gls{zoc} is positive for given arbitrary marginals. However, it remains an open question, if there exist multiple joint distributions that achieve a positive \gls{zoc}.
Additionally, it is unclear, if there exist joint distributions for which the \gls{zoc} is positive but strictly less than the upper bound derived in \cite{Besser2020twc}.
A positive \gls{zoc} implies that a transmission without any outages is possible. This is relevant, e.g., in the context of \gls{urllc}, since no retransmissions are necessary which allows a very low latency. Besides, if there exists only a single joint distribution with a positive \gls{zoc}, this would be an unstable operating point.
We are therefore also interested in other joint distributions achieving positive \glspl{zoc}.

In this work, we will answer both of these open questions: 
We show that there exists an infinite number of joint distributions, for which the \gls{zoc} is strictly between zero and the best-case upper bound.
The proof is constructive in the way that we explicitly give a parametrized family of joint distributions that achieve positive \glspl{zoc}.
Our results hold for general fading distributions, and are evaluated for the special case of Rayleigh fading and Nakagami-$m$ fading.

From the results of this work and previous work, e.g., \cite{Besser2020twc}, it becomes apparent that the outage performance of communication systems can be significantly improved by dependent channels.
In view of the strict reliability requirements that current and future 6G applications have~\cite{Tataria2021,Ding2021urllc}, this indicates that active dependency control is a promising direction of research to enable ultra-reliable communications.
Emerging technologies like \glspl{ris}~\cite{DiRenzo2019} could allow for such an active dependency control.

This work is concerned with the theoretical limits of the outage performance for dependent fading channels, which will serve as an upper bound for practical and heuristic future communication system designs.
The basis of our derivations is copula theory~\cite{Nelsen2006}, which allows a flexible modeling of dependency structures between random variables.
Copulas have been used in communications before to model dependency between channels.
One of the first works on applying this technique in the context of wireless communications is \cite{Ritcey2007}.
There, a Clayton copula is used to introduce a new fading model for Nakagami-$m$ fading with tail dependence.
In \cite{Kitchen2010,Gholizadeh2015}, copulas have been used to model dependency in \gls{mimo} channels.
In \cite{Peters2014}, it was shown on measurements that real channels can show a tail dependency which can be described by copulas.
The outage probability of the two-user Rayleigh fading \gls{mac}, where the two links follow a specific copula, is evaluated in \cite{Ghadi2020}.
Copulas have also been used to model interference in \gls{iot} networks~\cite{Zheng2019} and in the area of physical layer security~\cite{Besser2020tcom,Besser2020wsa,Ghadi2021}.
General bounds on the outage performance for dependent slow-fading channels can be found in \cite{Besser2020part2}.

The tools from copula theory are not only helpful for analyzing the outage capacity for slow fading channels, but also for the ergodic capacities for fast fading multi-user channels~\cite{Besser2020part1}.
In particular, when the multi-user channel has the same marginal property, copulas have been used to derive capacities regions for Gaussian interference channels, Gaussian broadcast channels and the secrecy capacity for Gaussian wiretap channels \cite{Lin2019icc,Lin2019isit,Lin2020tcom}.

However, in this work, we focus on the \gls{zoc} for $n$ \emph{dependent} slow-fading links. Our contributions are summarized as follows.
\begin{itemize}
	\item We show that the \gls{zoc} in a multi-connectivity setting can be strictly positive for dependent channels with arbitrary marginal fading distributions, even when only statistical \gls{csit} is available and a constant transmit power is used.
	We consider both \gls{mrc} and \gls{sc} at the receiver.
	\item We show that the \gls{zoc} is maximized by countermonotonic channel gains in the case of two dimensions. For the general case, we show that there exists an infinite number of joint distributions, for which the \gls{zoc} is strictly between zero and the best-case (upper) bound. The proof is constructive in the way that we explicitly state a parameterized family of joint distributions that achieve positive \glspl{zoc}.
	\item We provide an explicit expression for the maximum \gls{zoc} for \gls{sc} at the receiver with $n$ homogeneous links.
	For $n$ heterogeneous links, we give an implicit characterization of the maximum \gls{zoc}.
	\item We derive inner and outer bounds for the maximum \gls{zoc} in the general case of $n>2$ channels with homogeneous channel gains $\X$ and \gls{mrc} at the receiver. The bounds are within a finite gap for all $n$ equal to $\log_2\left({\expect{\X}}\right) - \log_2\left({\inv{F_{\X}}(1/\e)}\right)$.
	\item All results are evaluated for the special cases of Rayleigh fading and Nakagami-$m$ fading. In addition, we provide all plots presented in this work as interactive versions in \cite{BesserGitlab}.
\end{itemize}

The practical implications of the results are the following.
If one is able to tune the dependency structure between different channels, e.g., by using smart relays~\cite{Wang2008} or reconfigurable meta-surfaces~\cite{DiRenzo2019}, it is possible to transmit data over fading channels without any outages.
The second result implies a certain robustness.
Assume that we are able to parameterize the dependency structure within a desired range.
Even if the designed parameters are not set perfectly, the \gls{zoc} can still be positive, if one hits one of the infinitely many joint distributions with positive \gls{zoc}.

The rest of the paper is organized as follows.
In Section~\ref{sec:background-system-model}, we state the system model and problem formulation of this work.
We also introduce some necessary mathematical background from copula theory.
Some general observation and results are given in Section~\ref{sec:general}.
The two diversity combining techniques \gls{mrc} and \gls{sc} are investigated in detail in Sections~\ref{sec:mrc} and \ref{sec:sc}, respectively.
Finally, Section~\ref{sec:conclusion} concludes the paper.

\textit{Notation:}
Throughout this work, we use the following notation. 
Random variables are denoted in capital letters, e.g., $\X$, and their realizations in small letters, e.g., $x$.
Vectors are written in boldface letters, e.g., $\vec{X}$.
We use $F$ and $f$ for a probability distribution and its density, respectively. The expectation is denoted by $\mathbb{E}$ and the probability of an event by $\Pr$. It is assumed that all considered distributions are continuous. The uniform distribution on the interval $[a,b]$ is denoted as $\mathcal{U}[a,b]$. The normal distribution with mean $\mu$ and variance $\sigma^2$ is denoted as $\mathcal{N}(\mu, \sigma^2)$.
The derivative of a function $f$ is denoted by $f'$.
As a shorthand, we use $\positive{x}=\max\left[x, 0\right]$.
The real numbers, non-negative real numbers, and extended real numbers are denoted by $\reals$, $\reals_{+}$, and $\bar{\reals}$, respectively. Logarithms, if not stated otherwise, are assumed to be with respect to the natural base.
\section{Preliminaries and System Model}\label{sec:background-system-model}
We consider a slow fading channel with $n$ receive antennas.
The symbol ${M}\in\complexes$ is transmitted with a rate $R$ over the slow fading links ${H}_i\in\complexes$, $i=1,\dots{},n$.
The received signal $\vec{Y}=(\Y_1, \dots{}, \Y_n)\in\complexes^n$ is given as
\begin{equation}
\Y_i = {H}_i {M} + {N}_i\,,
\end{equation}
where ${N}_i$ are independent complex Gaussian noise terms with zero mean and variance $\sigma_i^2$. The transmit \glspl{snr} of the individual channels are given as $\snr_i=P/\sigma_i^2$, where $P$ is the average transmit power constraint.

We assume that the receiver has perfect \gls{csi} while the transmitter only has statistical \gls{csi}. The receiver applies the diversity combining strategy $L: \reals_{+}^{n}\to\reals$. In this work, we consider strategies in the form $L(\X_1, \dots{}, \X_n)$ with $\X_i = \snr_i\abs{{H}_i}^2$.
In this case, an outage occurs, if the instantaneous channel capacity is less than the rate $R$ used for the transmission, i.e., $\log_2\left(1+ L(\X_1, \dots{}, \X_n)\right)<R$.
The maximum rate $\epsrate$ for which the probability of such an outage is less than $\varepsilon$ is called \emph{$\varepsilon$-outage capacity} and defined as~\cite{Tse2005}
\begin{equation}\label{eq:def-eps-rate}
\epsrate = \sup_{R\geq 0}\left\{R \;|\; \Pr\Big(\log_2(1+L(\X_1, \dots{}, \X_n)) < R\Big) \leq \varepsilon\right\}.
\end{equation}

This can be reformulated as the (receive) \gls{snr} optimization
\begin{equation}\label{eq:def-prob-eps-outage-capac}
s^\star(\varepsilon) = \sup_{s\geq 0} \left\{s \;|\; \Pr\left(L(\X_1, \dots{}, \X_n)<s\right) \leq \varepsilon\right\}
\end{equation}
with $\epsrate = \log_2(1 + s^\star(\varepsilon))$. %

\subsection{Problem Formulation}\label{sub:problem-formulation}
We consider a communication scenario with $n$ slow-fading channels which can be dependent. The receiver has perfect \gls{csi}, while we only assume statistical \gls{csit}.
The quantity of interest is the zero-outage capacity $\zerorate$, which is given as the maximum rate fulfilling
\begin{equation}\label{eq:general-cond-zero-prob}
\Pr\left(L(\X_1, \dots{}, \X_n)<2^{\zerorate}-1\right) = 0\,.
\end{equation}
The main question that we will answer in this work is:
\emph{Given marginal distributions of the individual wireless fading channels, do there exist joint distributions (and how many) which achieve positive \glspl{zoc} and what \glspl{zoc} can be achieved by different diversity combining techniques?}
As a consequence, this also includes the maximum \gls{zoc} with respect to all joint distributions with the given marginals.

By the construction in \cite{Besser2020twc}, we know that there exists at least one joint distribution with positive zero-outage capacity. However, a singleton is an unstable operating point and it is therefore of interest, how robust a positive \gls{zoc} is with respect to the joint distribution. In addition, we are interested in the maximum \gls{zoc} since it shows what the best case performance can be.

\subsection{Mathematical Background}
To describe and analyze the structure of joint distributions, we will use tools from copula theory~\cite{Nelsen2006}, which we introduce in the following.

\begin{defn}[Copula]\label{def:copula}
	A copula is an $n$-dimensional distribution function with standard uniform marginals.
\end{defn}

The practical relevance of copulas stems from Sklar's theorem, which we restate in the following Theorem~\ref{thm:sklar}.
\begin{thm}[{Sklar's Theorem~\cite[Thm.~2.10.9]{Nelsen2006}}]\label{thm:sklar}
	Let $H$ be an $n$-dimensional distribution function with margins $F_{1}, \dots{}, F_{n}$. Then there exists a copula $C$ such that for all $x\in\bar{\reals}^n$,
	\begin{equation}\label{eq:sklar-joint-dist}
	H(x_1, \dots{}, x_n) = C(F_{1}(x_1), \dots{}, F_{n}(x_n))\,.
	\end{equation}
	If $F_{1}, \dots{}, F_{n}$ are all continuous, then $C$ is unique. Conversely, if $C$ is a copula and $F_{1}, \dots{}, F_{n}$ are distribution functions, then $H$ defined by \eqref{eq:sklar-joint-dist} is an $n$-dimensional distribution function with margins $F_{1}, \dots{}, F_{n}$.
\end{thm}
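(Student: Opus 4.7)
The plan is to split the proof of Theorem~\ref{thm:sklar} into three pieces: (i) existence of a copula $C$ satisfying \eqref{eq:sklar-joint-dist}; (ii) uniqueness of $C$ when all marginals are continuous; (iii) the converse, that $C\circ(F_1,\dots,F_n)$ is a valid $n$-dimensional CDF whenever $C$ is a copula and each $F_i$ is a one-dimensional CDF. Since the paper only ever instantiates the theorem with continuous fading marginals, I would emphasize the clean continuous case and flag the discrete complications only at the end.

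For existence, the cleanest route is the probability integral transform. Let $(X_1,\dots,X_n)$ be a random vector with joint CDF $H$, and define $U_i := F_i(X_i)$. If $F_i$ is continuous, then $U_i \sim \mathcal{U}[0,1]$, so the joint CDF $C$ of $(U_1,\dots,U_n)$ automatically has uniform marginals and is therefore a copula. Continuity and monotonicity of the $F_i$ let me identify $\{F_i(X_i)\le F_i(x_i)\}$ with $\{X_i\le x_i\}$ up to a null set, giving
\begin{equation*}
C(F_1(x_1),\dots,F_n(x_n)) = \Pr(X_1\le x_1,\dots,X_n\le x_n) = H(x_1,\dots,x_n).
\end{equation*}
For the converse, I would define $H$ by \eqref{eq:sklar-joint-dist} and verify the three defining properties of an $n$-dimensional CDF: the marginal property follows by driving all but one coordinate to $+\infty$ and using $C(1,\dots,1,u,1,\dots,1)=u$; the limits at $-\infty$ follow from $F_i(-\infty)=0$ together with the grounded property of $C$; and $n$-increasingness of $H$ reduces to $n$-increasingness of $C$ via the substitution $u_i=F_i(x_i)$, since monotonicity of $F_i$ sends each rectangle in $x$-space to a rectangle in $u$-space of equal $C$-volume.

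For uniqueness under continuity, the existence construction forces $C$ to take the value $H(x_1,\dots,x_n)$ at every point of the form $(F_1(x_1),\dots,F_n(x_n))$, so $C$ is pinned down on $\operatorname{Ran}(F_1)\times\cdots\times\operatorname{Ran}(F_n)$. Continuity of each $F_i$ makes its range dense in $[0,1]$, and copulas are Lipschitz continuous in each argument (a direct consequence of being $n$-increasing with uniform marginals), so agreement on a dense subset of $[0,1]^n$ extends uniquely to all of $[0,1]^n$.

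The main obstacle lies outside the continuous regime that this paper requires: if some $F_i$ fails to be continuous, $\operatorname{Ran}(F_i)$ can be a proper subset of $[0,1]$, the construction above only pins $C$ down on a subcopula, and one must invoke a multivariate subcopula extension lemma (the Schweizer--Sklar-style bilinear interpolation) to produce an actual copula, after which uniqueness genuinely fails. Since the authors state the continuous hypothesis and use copulas only in that setting, this delicate extension step does not enter, and the probability integral transform sketched above delivers Theorem~\ref{thm:sklar} directly.
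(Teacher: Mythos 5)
The paper does not prove this statement at all: it restates Sklar's theorem verbatim and attributes it to \cite[Thm.~2.10.9]{Nelsen2006}, so there is no in-paper argument to compare against. Your sketch is the standard textbook proof and is essentially correct in the regime the paper actually uses. The existence step via the probability integral transform is sound: for continuous $F_i$ the event $\{F_i(X_i)\le F_i(x_i)\}\setminus\{X_i\le x_i\}$ is contained in a flat piece of $F_i$ and hence has probability $F_i(t)-F_i(x_i)=0$, which justifies your ``up to a null set'' identification. The uniqueness argument (agreement on $\operatorname{Ran}(F_1)\times\cdots\times\operatorname{Ran}(F_n)$ plus the Lipschitz property $\abs{C(u)-C(v)}\le\sum_i\abs{u_i-v_i}$) is correct; in fact for continuous $F_i$ the range is all of $[0,1]$, so density is more than you need. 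The converse verification is also fine. The one caveat worth making explicit is that the theorem as stated asserts \emph{existence} for arbitrary margins, and your probability-integral-transform construction only delivers a bona fide copula when every $F_i$ is continuous; for general margins you obtain only a subcopula and must invoke the extension lemma you mention. You flag this honestly, and since the paper assumes throughout that all considered distributions are continuous, the omission is immaterial here --- but if this were to stand as a proof of the theorem as literally stated, the extension step would have to be carried out rather than deferred.
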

This theorem implies that copulas can be used to describe dependency structures between random variables, regardless of their marginal distributions. This allows us to separate the dependency structure (described by the copula $C$) from the marginal distributions $F_{1}, \dots{}, F_{n}$.
We will see in the following section that the \gls{zoc} depends on the underlying copula between the channel gains.
{Some} results in this work are based on the Fr\'{e}chet-Hoeffding bounds for copulas, which we state in the following theorem.
\begin{thm}[{Fr\'{e}chet-Hoeffding Bounds~\cite[Thm.~2.10.12]{Nelsen2006}}]\label{thm:frechet-hoeffding-bounds}
	Let $C$ be a copula. Then for every $u\in[0, 1]^n$
	\begin{equation}
	W(u) \leq C(u) \leq M(u)
	\end{equation}
	with %
	\begin{align}
	W(u) &= \max\left\{u_1 + \cdots{} + u_n - n + 1, 0\right\}\,,\\
	M(u) &= \min\left\{u_1, \dots{}, u_n\right\}\,.
	\end{align}
\end{thm}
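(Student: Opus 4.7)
The plan is to interpret the copula $C$ probabilistically and then derive each bound from an elementary set-inclusion or union-bound argument. By Definition~\ref{def:copula}, $C$ is the joint distribution of a vector of standard uniform random variables $(U_1,\dots,U_n)$, so that $C(u)=\Pr(U_1\leq u_1,\dots,U_n\leq u_n)$ and $\Pr(U_i\leq u_i)=u_i$ for each $i$. I would work throughout with this representation.

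For the upper bound, I would observe that the event $\{U_1\leq u_1,\dots,U_n\leq u_n\}$ is contained in each of the marginal events $\{U_i\leq u_i\}$. Monotonicity of probability then yields $C(u)\leq\Pr(U_i\leq u_i)=u_i$ for every $i$, and taking the minimum over $i$ gives $C(u)\leq\min\{u_1,\dots,u_n\}=M(u)$.

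For the lower bound, I would pass to complements and apply the union bound. Writing
\begin{equation*}
C(u)=1-\Pr\Bigl(\bigcup_{i=1}^{n}\{U_i>u_i\}\Bigr)\geq 1-\sum_{i=1}^{n}\Pr(U_i>u_i)=\sum_{i=1}^{n}u_i-n+1,
\end{equation*}
and combining this with the trivial bound $C(u)\geq 0$ (since $C$ is a distribution function) yields $C(u)\geq\max\{u_1+\cdots+u_n-n+1,\,0\}=W(u)$.

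I do not expect a genuine obstacle here: the result follows from at most two lines of elementary probability once the copula is identified with the joint cdf of uniform marginals. The only point that merits care is to verify that both bounds are attained at the boundary points of $[0,1]^n$ (so that the statement is sharp), and, separately, that $M$ is itself a copula for all $n\geq 2$ while $W$ is a copula only for $n=2$; this distinction, however, is not needed for the inequality claimed in the theorem.
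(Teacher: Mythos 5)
Your proof is correct. The paper gives no proof of its own here --- it imports the result verbatim from Nelsen with a citation --- and your argument is the standard one: the upper bound via monotonicity of the joint law under set inclusion, and the lower bound via the union bound on complements (equivalently, the non-negativity of the $C$-volume of $[u_1,1]\times\cdots\times[u_n,1]$ in Nelsen's analytic phrasing) together with $C(u)\geq 0$.
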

In the case that $n=2$, $W$ is a copula and two random variables whose joint distribution follows the copula $W$ are called countermonotonic.
The upper bound $M$ is a copula for all $n$ and random variables that follow $M$ are called comonotonic~\cite{Nelsen2006}.
\section{General Considerations and Results}\label{sec:general}

In this section, we will make some general observations and derivations about the considered problem. Throughout this work, we will assume that the diversity combining function $L$ only depends on the channel gains $\X_i=\snr_i \abs{{H}_i}^2$ and is non-decreasing in each variable.

The underlying observation for our derivations in this work is the following. As stated in Section~\ref{sec:background-system-model}, we know that the outage probability corresponds to the probability of the event $L(\X_1, \dots{}, \X_n)<s$. An equivalent way of expressing this is via the integral of the joint distribution $F_{\X_1,\dots{},\X_n}$ over the area
\begin{equation}
	\mathcal{S}=\left\{(x_1, \dots{}, x_n)\in\reals_{+}^{n}\;|\;L(x_1, \dots{}, x_n)<s\right\}\,.
\end{equation}
Then the outage probability, given $R$, can be written as
\begin{equation}\label{eq:outage-prob-integral}
	\varepsilon = \int_{\mathcal{S}}\!\diff{C(F_{\X_1}(x_1), \dots{}, F_{\X_n}(x_n))}\,,
\end{equation}
where we use the copula representation of the joint distribution based on Theorem~\ref{thm:sklar}.

Recall that our goal is to have zero outages, i.e., $\varepsilon=0$.
With reference to \eqref{eq:outage-prob-integral}, this is achieved if the probability of the joint distribution is zero in $\mathcal{S}$.
The joint \gls{cdf} is zero if and only if its copula $C$ is zero, and the corresponding area of $(\X_1, \dots{}, \X_n)$ can be written as
\begin{equation}\label{eq:def-area-b}
	\mathcal{B}=\left\{(x_1, \dots{}, x_n) \;|\; C(F_{\X_1}(x_1), \dots{}, F_{\X_n}(x_n))=0\right\}\,.
\end{equation}
In other words, $\mathcal{S}$ is not inside the support of $(\X_1, \dots{}, \X_n)$. %
This idea is exemplarily shown in Fig.~\ref{fig:areas-max-zoc} for the two diversity schemes \gls{mrc} and \gls{sc}.
Detailed explanations and results for both diversity schemes will be given in the following sections.

\begin{figure}
	\centering
	\begin{tikzpicture}
	\begin{axis}[
		width=.95\linewidth,
		height=.24\textheight,
		xlabel={$x_1$},
		ylabel={$x_2$},
		domain=0:3,
		xmin=0.05,
		xmax=3,
		ymin=0,
		axis on top,
		xtick={0, 1.534, 2.917},
		xticklabels={0, $s^\star_{\text{SC}}$, $s^\star_{\text{MRC}}$},
		ytick={0, 1.534, 2.917},
		yticklabels={0, $s^\star_{\text{SC}}$, $s^\star_{\text{MRC}}$},
		]
		
		\addplot[plot4, thick, fill=plot4, fill opacity=.2, area legend] table [x=x, y=boundary] {data/boundary-example.dat}\closedcycle;
		\addlegendentry{$\mathcal{B}$};
		\node[text=plot4] at (axis cs: 2.8,.55) {$\mathcal{B}$};

		\addplot[plot2, thick, area legend, fill=plot2, fill opacity=.2] {2.917-x}\closedcycle;
		\addlegendentry{$\mathcal{S}_{\text{MRC}}$};
		\node[text=plot2] at (axis cs: .2,2) {$\mathcal{S}_{\text{MRC}}$};
		\addplot[plot1, dashed, thick, area legend, fill=plot1, fill opacity=.2]  coordinates {
			(0, 0)
			(1.534, 0)
			(1.534, 1.534)
			(0, 1.534)
		}\closedcycle;
		\addlegendentry{$\mathcal{S}_{\text{SC}}$};
		\node[text=plot1] at (axis cs: .75,.75) {$\mathcal{S}_{\text{SC}}$};
	\end{axis}
\end{tikzpicture}
	\caption{Areas corresponding to the \gls{zoc}. Area $\mathcal{B}$ shows the area where $F_{\X_1,\X_2}=0$. Area $\mathcal{S}_{\text{MRC}}$ shows the integration area from \eqref{eq:outage-prob-integral} to calculate the outage probability when \gls{mrc} is used at the receiver. The value $s^\star_{\text{MRC}}$ denotes the maximum value of $s$ such that $\mathcal{S}_{\text{MRC}}$ is still a subset of $\mathcal{B}$. The analogue for \gls{sc} at the receiver is denoted by the index \enquote{SC}.}
	\label{fig:areas-max-zoc}
\end{figure}
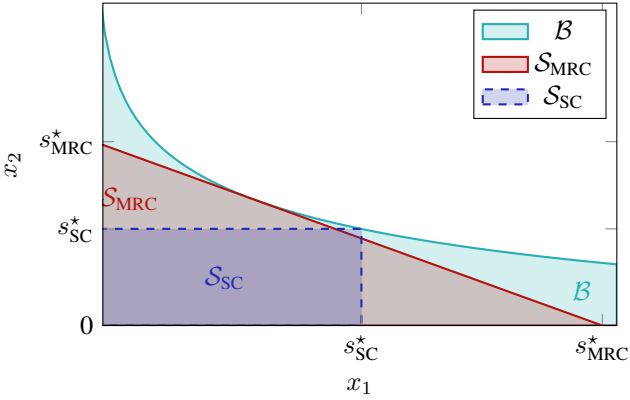

\subsection{Two-Dimensional Case}
First, we show that the \gls{zoc} $\zerorate$ is upper bounded by countermonotonic channels for $n=2$.
\begin{cor}[{Maximum \Gls{zoc}}]\label{cor:max-zoc-countermonotonic}
	The maximum \gls{zoc} for two links with channel gains $\X_1$ and $\X_2$ is attained by countermonotonic random variables, i.e., their joint distribution follows copula $W$.
\end{cor}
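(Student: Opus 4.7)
My plan is to upper bound the zero-outage SNR $s^\star$ for an arbitrary copula $C$ coupling $(\X_1,\X_2)$ with the prescribed marginals, and then show that this upper bound is attained precisely by the countermonotonic copula $W$. The main tool is the Fr\'{e}chet-Hoeffding lower bound of Theorem~\ref{thm:frechet-hoeffding-bounds}, leveraged against the assumed monotonicity of the diversity combining function $L$.

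First, I would fix any copula $C$ and any point $(x_1^0,x_2^0)\in\reals_{+}^2$ satisfying $F_{\X_1}(x_1^0)+F_{\X_2}(x_2^0)>1$. Theorem~\ref{thm:frechet-hoeffding-bounds} then yields
\begin{equation*}
\Pr\!\left(\X_1\leq x_1^0,\X_2\leq x_2^0\right)=C\!\left(F_{\X_1}(x_1^0),F_{\X_2}(x_2^0)\right)\geq F_{\X_1}(x_1^0)+F_{\X_2}(x_2^0)-1>0,
\end{equation*}
so a strictly positive amount of mass is trapped in the lower-left quadrant of $(x_1^0,x_2^0)$. Since $L$ is non-decreasing in each argument, on this event $L(\X_1,\X_2)\leq L(x_1^0,x_2^0)$, and hence $\Pr(L(\X_1,\X_2)<s)>0$ for every $s>L(x_1^0,x_2^0)$. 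By the definition of $s^\star$ in \eqref{eq:def-prob-eps-outage-capac} at $\varepsilon=0$, this forces $s^\star\leq L(x_1^0,x_2^0)$.

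Taking the infimum of the right-hand side over the open region $\{(x_1,x_2):F_{\X_1}(x_1)+F_{\X_2}(x_2)>1\}$ and invoking the assumed continuity of the marginals together with continuity of $L$, the infimum coincides with the minimum of $L$ along the boundary curve, giving
\begin{equation*}
s^\star\leq\inf\left\{L(x_1,x_2)\;\middle|\;F_{\X_1}(x_1)+F_{\X_2}(x_2)=1\right\}
\end{equation*}
for every copula $C$. To see that $W$ attains equality, I would use that countermonotonic $(\X_1,\X_2)$ may be written as $\X_2=\inv{F_{\X_2}}(1-F_{\X_1}(\X_1))$ almost surely, so its support is exactly the curve $F_{\X_1}(x_1)+F_{\X_2}(x_2)=1$; thus the essential infimum of $L(\X_1,\X_2)$ under $W$ equals the infimum of $L$ along this curve, matching the upper bound, and $\zerorate=\log_2(1+s^\star)$ is maximized by $W$.

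The step I expect to need the most care is the passage from the open strict-inequality region $\{F_{\X_1}+F_{\X_2}>1\}$ to the boundary curve: without continuity of $L$ and of the marginal CDFs the infimum over the open region could strictly exceed the value on the curve, breaking the chain. The paper's standing continuity assumption on the distributions, together with the continuity of $L$ implicit in both \gls{mrc} and \gls{sc}, makes this transition clean, but it is worth spelling out to justify why the bound is tight.
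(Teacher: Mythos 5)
Your proposal is correct, and it takes a genuinely different route from the paper. The paper's proof is essentially a citation: it invokes \cite[Thm.~3.2]{Frank1987}, which states that for a fixed threshold the minimal value of $\Pr(\X_1+\X_2<s)$ over all couplings is attained by the copula $C_\varepsilon$, and then specializes to $\varepsilon=0$ to obtain $C_0=W$. You instead give a self-contained first-principles argument: the Fr\'{e}chet--Hoeffding lower bound traps positive mass in the lower-left quadrant of any point with $F_{\X_1}(x_1)+F_{\X_2}(x_2)>1$, monotonicity of $L$ converts this into the universal bound $s^\star\leq\inf\{L(x_1,x_2)\mid F_{\X_1}(x_1)+F_{\X_2}(x_2)=1\}$, and the support characterization of the countermonotonic coupling shows $W$ attains it. Two things your version buys: it works uniformly for \emph{any} combiner $L$ that is non-decreasing in each argument (so it covers \gls{mrc} and \gls{sc} in one stroke, whereas the cited result of Frank et al.\ concerns the distribution of the sum specifically, and the paper has to re-invoke \cite{Frank1987} separately for \gls{sc} in Section~\ref{sec:sc}); and it makes the mechanism visible, namely that it is exactly the region $\mathcal{B}$ of \eqref{eq:def-area-b} with $C=W$ that controls the answer, which is the geometric picture the paper later builds on for $n>2$. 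What the paper's citation buys in exchange is the full $\varepsilon$-outage statement for $\varepsilon>0$, which your argument as written does not recover since the Fr\'{e}chet--Hoeffding bound only certifies where the mass is strictly positive, not how small it can be made. Your closing caveat about passing from the open region to the boundary curve is well placed; under the paper's standing assumptions (continuous marginals, continuous non-decreasing $L$) the transition is indeed clean, and one may also note that on the curve itself the countermonotonic support may omit points where a marginal has a flat piece, which is excluded here but would be the only other place requiring care.
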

\begin{proof}
	The proof can be found in Appendix~\ref{app:proof-cor-max-zero-out}.
\end{proof}

\begin{rem}[{Minimum \Gls{zoc}}]
	It is well-known that the \gls{zoc} can be zero, e.g., for channels with independent links~\cite{Tse2005}. Since the capacity is a non-negative quantity, we can conclude that zero is the lower bound on the \gls{zoc}.
\end{rem}

\subsection{General Case}
The extension to the general $n$-dimensional case with $n>2$ is not straightforward, since $W$ is only a valid copula for $n=2$.
However, we can reformulate the problem in the general case using the discussions from above.
Our goal is to find the maximum value of $R$ in \eqref{eq:def-eps-rate} (or equivalently, $s^\star$ in \eqref{eq:def-prob-eps-outage-capac}), such that $\mathcal{S}$ is still a subset of $\mathcal{B}$.
Note that the boundary of $\mathcal{B}$ is not necessarily convex.
We, therefore, rewrite the optimization problem \eqref{eq:def-prob-eps-outage-capac} as
\begin{equation*}%
	s^\star = \max_{(x_1, \dots{}, x_n)\in \mathcal{B}} L(x_1, \dots{}, x_n)\,.
\end{equation*}
Since we know from the monotonicity of $L$ that the maximum will be on the boundary of $\mathcal{B}$ defined by $B$, $s^\star$ can also be written as minimizing the function $L$ over the boundary $B$ as
\begin{equation}\label{eq:opt-problem-boundary}
	\begin{split}
		s^\star = \min_{(x_1, \dots{}, x_n)} &L(x_1, \dots{}, x_n)\\
		\text{s.\,t.~} &B(x_1, \dots{}, x_n) = 0\,.
	\end{split}%
\end{equation}
The structure of the boundary function $B$ is determined by the underlying joint distribution and we will give particular examples in the following sections.
Throughout the rest of this work, we will assume that $B$ is a smooth function.
\section{Maximum Ratio Combining}\label{sec:mrc}

First, we will investigate the case that the receiver applies \gls{mrc} as the diversity combining technique. In this case, the combination function $L$ is the sum of the channel gains $\X_i$~\cite{Tse2005}%
\begin{equation*}
	L_{\text{MRC}}(\X_1, \dots{}, \X_n) = \sum_{i=1}^{n} \X_i\,.
\end{equation*}

In order to introduce the basic concepts, we will start with the two-dimensional case and then extend the results to the general $n$-dimensional case.
The results will be illustrated with two examples, namely Rayleigh fading and Nakagami-$m$ fading.

\subsection{Two-Dimensional Case}
We start with the two-dimensional case $n=2$.
From Corollary~\ref{cor:max-zoc-countermonotonic}, we know that the maximum \gls{zoc} is attained for countermonotonic $(\X_1, \X_2)$.
In the following theorem, we show that for each value $c$ between zero and the maximum \gls{zoc}, there exists a joint distribution for which the \gls{zoc} is equal to $c$, i.e., the \gls{zoc} is continuous with respect to the joint distribution.

\begin{thm}[Zero-Outage Capacities for Two Links with \Gls{mrc}]\label{thm:mrc-achievable-zero-out-two-links}
	Let $\X_1$ and $\X_2$ be non-negative continuous random variables representing the channel gains of two communication links. The receiver applies \gls{mrc} as the diversity combining technique.
	Then there exist joint distributions of $\X_1$ and $\X_2$ with the following zero-outage capacities for $t\in[0, 1]$
	\begin{equation}\label{eq:mrc-zero-outage-capacity-two-links}
		\zerorate(t) = \log_2\left(1 + \min\left\{\inv{F_{\X_1}}(t),\ \inv{F_{\X_2}}(t),\ x^\star+B_t(x^\star)\right\}\right)
	\end{equation}
	with
	\begin{equation}\label{eq:def-b}
		B_t(x) = \inv{F_{\X_2}}(t-F_{\X_1}(x))
	\end{equation}
	and %
	\begin{equation}\label{eq:condition-x-star}
		\begin{split}
		x^\star = \argmin_{x\geq 0}\big\{x+B_t(x) \;|\; &f_{\X_1}(x)=\\[-1.5ex] &\;f_{\X_2}\left(\inv{F_{\X_2}}(t-F_{\X_1}(x))\right)\big\}.
		\end{split}
	\end{equation}
\end{thm}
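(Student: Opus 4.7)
The plan is to prove the statement constructively, by exhibiting for each $t\in[0,1]$ an explicit joint distribution of $(\X_1,\X_2)$ with the prescribed marginals whose zero-outage capacity equals \eqref{eq:mrc-zero-outage-capacity-two-links}. By Sklar's theorem (Theorem~\ref{thm:sklar}) it is enough to build a copula $C_t$ with the right support and then let $F_{\X_1,\X_2}(x_1,x_2)=C_t(F_{\X_1}(x_1),F_{\X_2}(x_2))$. The guiding observation, obtained from \eqref{eq:def-area-b}--\eqref{eq:opt-problem-boundary}, is that $s^\star$ equals the minimum of $x_1+x_2$ over the boundary $B(x_1,x_2)=0$ of the support, so the construction must force this boundary to be the level curve $F_{\X_1}(x_1)+F_{\X_2}(x_2)=t$.

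For the construction I would take $V\sim\mathcal{U}[0,1]$ and define $(U_1,U_2)$ by
\begin{equation*}
(U_1,U_2)=\begin{cases}(V,\,t-V),&V<t,\\(V,\,V),&V\geq t,\end{cases}
\end{equation*}
then set $\X_i=\inv{F_{\X_i}}(U_i)$. A short check (case $u\leq t$ versus $u>t$) verifies that both $U_1$ and $U_2$ are $\mathcal{U}[0,1]$, so this is a valid copula in the sense of Definition~\ref{def:copula}. By construction the support of $(U_1,U_2)$, and hence $C_t$, is contained in $\{u_1+u_2\geq t\}$, so the support of $(\X_1,\X_2)$ lies in the region $\{F_{\X_1}(x_1)+F_{\X_2}(x_2)\geq t\}$, whose lower boundary is exactly the curve $x_2=B_t(x_1)$ for $x_1\in[0,\inv{F_{\X_1}}(t)]$, together with the comonotonic ray starting at $(\inv{F_{\X_1}}(t),\inv{F_{\X_2}}(t))$. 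Note that $t=1$ recovers the countermonotonic copula $W$ of Corollary~\ref{cor:max-zoc-countermonotonic}, and $t=0$ leaves the support unconstrained, so $\zerorate(0)=0$.

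With the support identified, \eqref{eq:opt-problem-boundary} reduces to computing
\begin{equation*}
s^\star(t)=\min_{x_1\in[0,\inv{F_{\X_1}}(t)]}\bigl\{x_1+B_t(x_1)\bigr\},
\end{equation*}
since on the comonotonic ray the sum $x_1+x_2$ exceeds $\inv{F_{\X_1}}(t)+\inv{F_{\X_2}}(t)$ and is dominated by the endpoint values on the curve. Differentiating and using $B'_t(x)=-f_{\X_1}(x)/f_{\X_2}(B_t(x))$ gives the interior stationarity condition $f_{\X_1}(x)=f_{\X_2}(B_t(x))$, which is exactly \eqref{eq:condition-x-star}. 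Evaluating the two endpoints yields $B_t(0)=\inv{F_{\X_2}}(t)$ and $0+\inv{F_{\X_1}}(t)=\inv{F_{\X_1}}(t)$, so the minimum of a continuous function on a closed interval is the smaller of the two endpoint values and any interior critical value, producing the triple-minimum in \eqref{eq:mrc-zero-outage-capacity-two-links}. Applying $\epsrate=\log_2(1+s^\star)$ from \eqref{eq:def-prob-eps-outage-capac} finishes the argument.

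I expect the main obstacle to be justifying the reduction to the curve $F_{\X_1}(x_1)+F_{\X_2}(x_2)=t$ rigorously: one must argue that no mass leaks into $\{F_{\X_1}(x_1)+F_{\X_2}(x_2)<t\}$, that the copula associated to the construction is indeed continuous so that Sklar's theorem applies in the form of Theorem~\ref{thm:sklar}, and that the minimum over the (generally non-convex) boundary is attained at either an endpoint or the interior critical point captured by \eqref{eq:condition-x-star}. The rest of the proof is a single-variable optimization that is routine once the copula-level picture is set up.
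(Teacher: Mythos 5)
Your proposal is correct and follows essentially the same route as the paper: your explicit coupling $(U_1,U_2)=(V,t-V)$ for $V<t$ and $(V,V)$ for $V\ge t$ realizes exactly the copula $C_t(a,b)=\max[a+b-t,0]$ on $[0,t]^2$ (and $M$ elsewhere) that the paper takes from Nelsen, and the remainder — reducing to the boundary curve $x_2=B_t(x_1)$, the slope-$(-1)$ tangency condition $f_{\X_1}(x)=f_{\X_2}(B_t(x))$, and the two endpoint candidates $\inv{F_{\X_1}}(t)$, $\inv{F_{\X_2}}(t)$ — matches the paper's argument step for step.
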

\begin{proof}
	The proof can be found in Appendix~\ref{app:proof-thm-achievable-zoc}.
\end{proof}
The parameter $t$ characterizes the joint distribution of the two fading links. For $t=0$ and $t=1$, $\X_1$ and $\X_2$ are comonotonic and countermonotonic, respectively. The maximum \gls{zoc} is therefore achieved for $t=1$.

\begin{rem}[Ambiguity of the Dependency Structure]\label{rem:ambiguity-copula-t}
	In the proof of Theorem~\ref{thm:mrc-achievable-zero-out-two-links}, we used the copula $C_t$ from \cite[Chap.~3.2.2]{Nelsen2006}.
	However, the only property we actually require for the proof is the fact that
	$C_t(a, b) = 0$ for $a+b\leq t$.
	Thus, any copula, that has this property, achieves the same result.
	Another example having this property is a generalization of the circular copula~\cite[Eq.~(3.1.5)]{Nelsen2006}
	\begin{equation*}
		C_{t, \text{circ}}(a, b) = \begin{cases}
			M(a, b) & \text{if } \abs{a-b} > t\\
			W(a, b) & \text{if } \abs{a+b-1} < 1-t\\
			\frac{a+b}{2} - \frac{t}{2} & \text{otherwise,}
		\end{cases}
	\end{equation*}
	which is illustrated in Fig.~\ref{fig:copula-circular} in Appendix~\ref{app:proof-thm-achievable-zoc}.
	There are also families of absolutely continuous copulas whose support is not the full unit square. An example is the Clayton copula given by~\cite[Eq.~(4.2.1)]{Nelsen2006}
	\begin{equation*}
		C_{\theta, \text{clay}}(a, b) = \left(\max\left[a^{-\theta} + b^{-\theta} - 1, 0\right]\right)^{-\frac{1}{\theta}},
	\end{equation*}
	with $\theta\in[-1, \infty)\setminus\{0\}$ for negative values of the parameter $\theta$.
	
	In Fig.~\ref{fig:clayton-rayleigh-example}, a numerical example to illustrate the structure of the joint distribution is shown where we show \num{2000} random samples of two Rayleigh fading channel gains with \glspl{snr} $\snr_1=\snr_2=\SI{0}{\decibel}$.
	Their joint distribution in Fig.~\ref{fig:clayton-rayleigh-t-0.75} follows the Clayton copula with $\theta=-0.75$.
	It can easily be seen that there are no realizations for which both channel gains are close to zero simultaneously.
	This enables a positive \gls{zoc}, cf. Fig.~\ref{fig:areas-max-zoc}.
	In contrast, the \gls{zoc} is zero for independent channels as can be seen in Fig.~\ref{fig:clayton-rayleigh-t0}.
	\begin{figure}[!htb]
		\subfloat[{Independence Copula}]{\begin{tikzpicture}%
	\begin{axis}[
		width=.5\linewidth, %
		height=.22\textheight, %
		xlabel={Channel Gain $\X_1$},
		ylabel={Channel Gain $\X_2$},
		xmin=0,
		xmax=8,
		ymin=0,
		ymax=6,
		ymajorgrids,
		xmajorgrids,
		xminorgrids,
		grid style={line width=.1pt, draw=gray!20},
		major grid style={line width=.25pt,draw=gray!40},
		]
		\addplot[plot0,thick,mark=*, only marks] table[x=x,y=y] {data/samples_rayleigh_clayton_theta0.00_snr0.00.dat};
	\end{axis}
\end{tikzpicture}
			\label{fig:clayton-rayleigh-t0}
		}
		\hfill
		\subfloat[{Clayton Copula with $\theta=-0.75$}]{\begin{tikzpicture}%
	\begin{axis}[
		width=.5\linewidth, %
		height=.22\textheight, %
		xlabel={Channel Gain $\X_1$},
		ylabel={Channel Gain $\X_2$},
		xmin=0,
		xmax=8,
		ymin=0,
		ymax=6,
		ymajorgrids,
		xmajorgrids,
		xminorgrids,
		grid style={line width=.1pt, draw=gray!20},
		major grid style={line width=.25pt,draw=gray!40},
		]
		\addplot[plot0,thick,mark=*, only marks] table[x=x,y=y] {data/samples_rayleigh_clayton_theta-0.75_snr0.00.dat};
	\end{axis}
\end{tikzpicture}
			\label{fig:clayton-rayleigh-t-0.75}
		}
		\caption{Random joint Rayleigh fading realizations with different copulas underlying the joint distribution. The \gls{snr} of both channel gains is set to $\snr=\SI{0}{\decibel}$.}\label{fig:clayton-rayleigh-example}
	\end{figure}
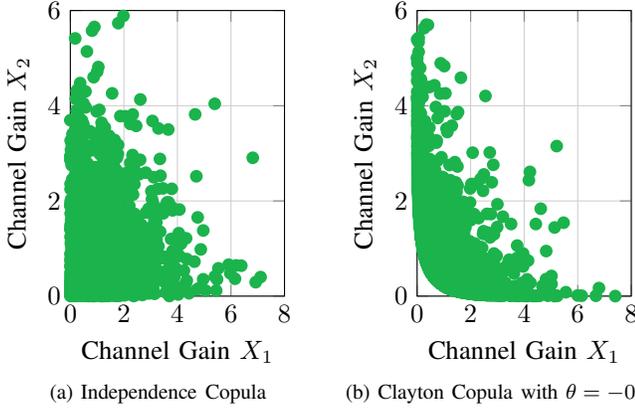
\end{rem}

\subsection{General Case}\label{sub:mrc-general-n-links}
From Theorem~\ref{thm:mrc-achievable-zero-out-two-links}, it can be seen that there exists an infinite number of joint fading distributions that achieve a positive \gls{zoc}.
This answers the first part of our question from Section~\ref{sub:problem-formulation} and shows that the set of joint distributions with a positive \gls{zoc} is not a singleton.
In the following, we will therefore only focus on the \emph{maximum} \gls{zoc} with respect to the set of joint distributions.

As shown in Corollary~\ref{cor:max-zoc-countermonotonic}, for the two-dimensional case, the maximum is achieved for countermonotonic channel gains, i.e., $\X_1$ and $\X_2$ follow the $W$-copula, cf.~Theorem~\ref{thm:frechet-hoeffding-bounds}.
Unfortunately, the extension to the general $n$-dimensional case is not straightforward since the Fr\'{e}chet-Hoeffding lower bound $W$ is not a copula anymore for $n>2$.
The most general case of arbitrary marginal distributions $F_{\X_1}, \dots{}, F_{\X_n}$ with $n>2$, therefore, remains an open problem.
However, we will derive new results for the $n$-dimensional case under the following assumptions.
We only consider the homogeneous case, i.e., all marginal distributions are the same, $F_{\X_1}=\cdots{}=F_{\X_n}=F_{\X}$. In addition, the distribution function $F_{\X}$ fulfills the following definition.
\begin{defn}[$B$-\nice{} Distribution]\label{def:nice-distribution}
	Let $\X_1, \dots{}, \X_n$ be non-negative continuous random variables with distribution function $F_{\X_1}=\dots{}=F_{\X_n}=F_{\X}$ with finite first moment. The distribution function $F_{\X}$ is a $B$-\nice{} distribution, if the solution to the optimization problem
	\begin{equation}\label{eq:opt-problem-nice-dist}
		\begin{split}
			s^\star = \min_{(x_1, \dots{}, x_n)} &\sum_{i=1}^{n} x_i\\
			\text{s.\,t.~} &B(x_1, \dots{}, x_n) = 0
		\end{split}
	\end{equation}
	lies on the identity line, i.e., $s^\star = n x^\star$ with $B(x^\star, \dots{}, x^\star)=0$.
\end{defn}

A characterization of some $B$-\nice{} distribution functions for two relevant $B$ is given in the following lemmas.
The two considered $B$ in Lemmas~\ref{lem:dist-cond-w} and \ref{lem:dist-cond-arch-copula} arise from the Fr\'{e}chet-Hoeffding lower bound and the lower bound on Archimedean copulas~\cite[Prop.~4.6]{McNeil2009}, respectively.
{They will later be needed in Theorems~\ref{thm:mrc-outer-bound-zoc-n-links-w} and \ref{thm:mrc-inner-bound-zoc-n-links} to bound the maximum \gls{zoc}.}

\begin{lem}\label{lem:dist-cond-w}
	For $B(x_1,\dots{},x_n)=\sum_{i=1}^{n}F_{\X}(x_i)-n+1$, a continuous distributions $F_{\X}$ with a strictly quasi-concave density $f_{\X}$ is a $B$-\nice{} distribution, if the following sufficient condition holds
	\begin{equation}\label{eq:cond-nice-w}
		f_{\X}'\left(\inv{F_{\X}}\left(1-\frac{1}{n}\right)\right) < 0\,.
	\end{equation}
\end{lem}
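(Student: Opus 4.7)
My plan is to analyze \eqref{eq:opt-problem-nice-dist} with $B(x_{1},\dots,x_{n})=\sum_{i}F_{\X}(x_{i})-n+1$ through its Lagrangian $\mathcal{L}(\vec{x},\lambda)=\sum_{i}x_{i}-\lambda\bigl(\sum_{i}F_{\X}(x_{i})-n+1\bigr)$. The stationarity conditions $\partial\mathcal{L}/\partial x_{i}=1-\lambda f_{\X}(x_{i})=0$ force $f_{\X}(x_{i})=1/\lambda$ for every $i$, so at any interior critical point the density takes the same value at all coordinates. Because $f_{\X}$ is strictly quasi-concave, it is strictly unimodal with a unique mode $m$, so every level set of $f_{\X}$ contains at most two elements, one on each side of $m$. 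Hence every interior critical point draws its coordinates from at most two distinct values $a$ and $b$ with $a\leq m\leq b$ and $f_{\X}(a)=f_{\X}(b)$.

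Next I would verify that the symmetric critical point $x_{1}=\dots=x_{n}=x^{\star}:=\inv{F_{\X}}(1-1/n)$ is a strict constrained local minimum of value $n x^{\star}$. It satisfies the constraint since $nF_{\X}(x^{\star})=n-1$, and the Hessian of $\mathcal{L}$ there reduces to $-\lambda f_{\X}'(x^{\star})\,I_{n}$ with $\lambda=1/f_{\X}(x^{\star})>0$. Restricted to the constraint tangent space $\{\vec{v}:\sum_{i}v_{i}=0\}$ this quadratic form is positive definite exactly when $f_{\X}'(x^{\star})<0$, which is hypothesis \eqref{eq:cond-nice-w}. I also note that \eqref{eq:cond-nice-w} is equivalent to $m<x^{\star}$, i.e., $F_{\X}(m)<1-1/n$, which I will reuse below.

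Finally I would rule out every other candidate as a global minimizer. At a non-symmetric interior critical point with $k\in\{1,\dots,n-1\}$ coordinates at $a<m$ and $n-k$ at $b>m$, the Hessian is diagonal with $-\lambda f_{\X}'(a)<0$ on the $a$-block (increasing branch of $f_{\X}$) and $-\lambda f_{\X}'(b)>0$ on the $b$-block; whenever $\min\{k,n-k\}\geq 2$, choosing test vectors supported inside a single block produces both signs on the tangent, exhibiting the point as a saddle. The edge cases $k\in\{1,n-1\}$ are the main obstacle since the block-Hessian argument degenerates; to handle them I would parametrize the one-dimensional family of such candidates by $\mu=f_{\X}(a)=f_{\X}(b)\in(0,f_{\X}(m))$ and use the coupled constraints $F_{\X}(a)+(n-1)F_{\X}(b)=n-1$ together with $f_{\X}(a)=f_{\X}(b)$ and $F_{\X}(m)<1-1/n$ to show that any solution along this curve yields $a+(n-1)b>nx^{\star}$. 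Boundary configurations are ruled out by noting that the objective diverges whenever a coordinate escapes to the right endpoint of $\supp F_{\X}$ (which is $+\infty$ in the Rayleigh and Nakagami-$m$ cases of interest), while a coordinate approaching $0$ forces the remaining ones onto quantiles strictly exceeding $x^{\star}$. Assembling these observations leaves $(x^{\star},\dots,x^{\star})$ as the unique minimizer, so $s^{\star}=n x^{\star}$ and $F_{\X}$ is $B$-\nice{}.
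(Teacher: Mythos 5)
Your overall strategy is the same as the paper's (Appendix~C): characterize the critical points of \eqref{eq:opt-problem-nice-dist} by the first-order condition $f_{\X}(x_i)=f_{\X}(x_j)$, verify via a second-order test that \eqref{eq:cond-nice-w} makes the symmetric point a local minimum, dismiss the asymmetric critical points, and invoke coercivity. The difference is in how the asymmetric points are dismissed, and that is exactly where your argument is incomplete. The paper eliminates $x_n$ by the implicit function theorem, obtains a diagonal reduced Hessian with entries $-f_{\X}'(x_i)/f_{\X}(x_n)$, and concludes that every critical point with coordinates on both sides of the mode is a saddle (or, in the configuration with a single coordinate on one side, a local maximum) and hence not a minimizer. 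Your Lagrangian computation of the tangent-space-restricted Hessian is the equivalent (indeed, the more carefully executed) version of this test, and it correctly reveals that when exactly one coordinate sits on one side of the mode the restricted quadratic form need not be indefinite: with one coordinate at $a<m$ and $n-1$ at $b>m$, its minimum over the tangent space is proportional to $-f_{\X}'(a)-f_{\X}'(b)/(n-1)$, which is positive whenever $f_{\X}'(a)<\abs{f_{\X}'(b)}/(n-1)$, so such a point can be a genuine constrained local minimum. You are right that this is the crux.

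The gap is that you never close that case. The sentence ``I would \dots{} show that any solution along this curve yields $a+(n-1)b>nx^\star$'' is a statement of what needs to be proved, not a proof: you do not exhibit how the hypotheses (strict quasi-concavity, $f_{\X}(a)=f_{\X}(b)$, the constraint $F_{\X}(a)+(n-1)F_{\X}(b)=n-1$, and $F_{\X}(m)<1-\tfrac1n$) force the objective above $nx^\star$ along that one-parameter family, and the symmetric edge case $k=n-1$ (objective $(n-1)a+b$ under $(n-1)F_{\X}(a)+F_{\X}(b)=n-1$) is not even set up. Since these are precisely the candidates that a second-order test cannot eliminate, the global-optimality claim — and hence the $B$-\nice{} property — is not established by what you have written. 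To count as a proof, you must either carry out the monotonicity/comparison argument along those curves explicitly, or replace it with a different global argument (e.g., the route the paper takes by classifying every non-symmetric critical point as a non-minimum and combining that with coercivity of the sum on the unbounded support).
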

\begin{proof}
	The proof can be found in Appendix~\ref{app:proof-lem-dist-cond-w}.
\end{proof}

\begin{lem}\label{lem:dist-cond-arch-copula}
	For $B(x_1,\dots{},x_n)=\sum_{i=1}^{n}\left(F_{\X}(x_i)\right)^{\frac{1}{n-1}}-n+1$, a continuous distribution $F_{\X}$ with a strictly quasi-concave density $f_{\X}$ is a $B$-\nice{} distribution, if the following two sufficient conditions hold.
	The function
	\begin{equation}
		g(x) = \frac{1}{n-1} F_{\X}(x)^{\frac{2-n}{n-1}}f_{\X}(x)
	\end{equation}
	is strictly quasi-concave, and
	\begin{equation}\label{eq:cond-nice-arch-copula}
		\frac{f_{\X}'(x^\star)}{\left(f_{\X}(x^\star)\right)^2} < \frac{n-2}{n-1} \left(1-\frac{1}{n}\right)^{1-n}%
	\end{equation}
	with
	\begin{equation}
		x^\star = \inv{F_{\X}}\left(\left(1-\frac{1}{n}\right)^{n-1}\right)\,.
	\end{equation}
\end{lem}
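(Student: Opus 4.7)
The plan is to carry out a Lagrangian analysis analogous to that of Lemma~\ref{lem:dist-cond-w}, but adapted to the new constraint function $B(x_1,\dots,x_n) = \sum_{i=1}^{n} F_{\X}(x_i)^{1/(n-1)} - n + 1$. First I would form the Lagrangian
\begin{equation*}
\mathcal{L} = \sum_{i=1}^{n} x_i - \lambda\Big(\sum_{i=1}^{n} F_{\X}(x_i)^{\frac{1}{n-1}} - (n-1)\Big),
\end{equation*}
and extract the stationarity conditions $\lambda\, g(x_i) = 1$ for $i = 1, \dots, n$, with $g$ as defined in the statement. Hence every critical point has its coordinates contained in a single level set of $g$; by the first hypothesis that $g$ is strictly quasi-concave, such a level set has at most two elements, so the $x_i$ take at most two distinct values.

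Next, I would verify that the symmetric configuration $x_1 = \cdots = x_n = x^\star$ is feasible: each summand equals $((1-1/n)^{n-1})^{1/(n-1)} = 1 - 1/n$, and the $n$ of them add up to $n-1$. Thus $(x^\star,\dots,x^\star)$ is a critical point that already lies on the identity line required by Definition~\ref{def:nice-distribution}, and what remains is to show it is the global minimizer.

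The second-order analysis is the core step. I would change variables by $u_i = F_{\X}(x_i)^{1/(n-1)} \in [0,1]$, which turns the constraint into the linear hyperplane $\sum u_i = n-1$ and the objective into $\sum h(u_i)$ with $h(u) = \inv{F_{\X}}(u^{n-1})$. A direct differentiation yields
\begin{equation*}
h''(u) = \frac{(n-1)(n-2)\, u^{n-3}}{f_{\X}(h(u))} - \frac{(n-1)^2\, u^{2n-4}\, f_{\X}'(h(u))}{f_{\X}(h(u))^3}.
\end{equation*}
Evaluating at $u = 1 - 1/n$, using $h(1-1/n) = x^\star$, and isolating $f_{\X}'(x^\star)/f_{\X}(x^\star)^2$ shows that the inequality $h''(1 - 1/n) > 0$ is algebraically equivalent to \eqref{eq:cond-nice-arch-copula}. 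A bordered-Hessian test on the hyperplane $\sum u_i = n-1$ (equivalently, a local Jensen argument) then certifies that the symmetric point is a strict local minimizer of $\sum h(u_i)$, hence of $\sum x_i$. As a sanity check, the $n=2$ case collapses to the condition $f_{\X}'(x^\star) < 0$, which is exactly \eqref{eq:cond-nice-w} of Lemma~\ref{lem:dist-cond-w}.

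The main obstacle is upgrading local to global minimality. I would use the first step to restrict competing minimizers to two-valued configurations in which $k$ coordinates equal some $a$ and $n-k$ equal some $b \neq a$ with $g(a) = g(b)$; the strict quasi-concavity of $g$ then pins $b$ to a single monotone function of $a$ along the constraint, reducing the global comparison to a one-dimensional argument analogous to the corresponding step in Lemma~\ref{lem:dist-cond-w}. Coercivity and continuity of $\sum x_i$ on the closed feasible set in $\reals_{+}^{n}$ guarantee that a global minimizer exists, so once all two-valued critical branches are eliminated, the symmetric point is forced to be the global minimum, proving that $F_{\X}$ is $B$-\nice{}.
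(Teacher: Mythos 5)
Your proposal is correct and follows the same overall strategy as the paper's proof—a first-order condition forcing all coordinates into a single level set of $g$, a second-order test at the symmetric point, elimination of the two-valued critical configurations via strict quasi-concavity of $g$, and coercivity to pass from local to global—but your execution of the second-order step takes a genuinely different and arguably cleaner route. The paper keeps the nonlinear constraint $B=0$, applies the implicit function theorem to the boundary surface, and obtains a diagonal Hessian (via \eqref{eq:second-deriv-b}) whose definiteness at the identity-line point is controlled by the sign of $g'(x^\star)$; this yields \eqref{eq:cond-nice-arch-copula}. You instead substitute $u_i=F_{\X}(x_i)^{1/(n-1)}$, which linearizes the constraint to the hyperplane $\sum_i u_i=n-1$ and makes the objective separable, $\sum_i h(u_i)$ with $h(u)=\inv{F_{\X}}(u^{n-1})$, so the entire test collapses to the scalar condition $h''(1-1/n)>0$. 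I verified the algebra: your expression for $h''$ is correct, the inequality $h''(1-1/n)>0$ is exactly \eqref{eq:cond-nice-arch-copula}, and under your substitution $g(x)=1/h'(u)$, so your Lagrange condition coincides with the paper's condition $\partial B/\partial x_i=\partial B/\partial x_j$ and strict quasi-concavity of $g$ is precisely what makes the level sets of $h'$ two-pointed. What your change of variables buys is a transparent Jensen-type interpretation and a compact feasible set in the $u$-coordinates; what the paper's formalism buys is a single boundary-function machinery reused verbatim across Lemmas~\ref{lem:dist-cond-w} and \ref{lem:dist-cond-arch-copula}. The one place you are vaguer than the paper is the elimination of the asymmetric critical branches: the paper argues concretely that at any mixed critical point one coordinate lies on the increasing and one on the decreasing branch of $g$, so the diagonal Hessian has eigenvalues of both signs and the point is a saddle; your appeal to a one-dimensional argument "analogous to Lemma~\ref{lem:dist-cond-w}" is the same idea but would need that sentence spelled out to be complete.
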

\begin{proof}
	The proof can be found in Appendix~\ref{app:proof-lem-dist-cond-arch-copula}.
\end{proof}

\begin{rem}%
	A stricter condition than \eqref{eq:cond-nice-arch-copula} is
	$f_{\X}'\left(\inv{F_{\X}}\left(\inv{\e}\right)\right) < 0$.
	Based on the unimodality of $f_{\X}$, this can also be written as $F_{\X}\left(\text{mode}\right) < \inv{\e}\approx 0.368$,
	where \enquote{mode} refers to the mode of distribution $F_{\X}$.
	Similarly, we can rewrite \eqref{eq:cond-nice-w} as
	$F_{\X}\left(\text{mode}\right) < 1-\frac{1}{n}$.
	Since these conditions are stricter than those from the above lemmas, they are also sufficient to characterize a $B$-\nice{} distribution.
\end{rem}

\begin{rem*}
Note that Lemmas~\ref{lem:dist-cond-w} and \ref{lem:dist-cond-arch-copula} only present sufficient conditions for $B$-\nice{} distributions.
It is, therefore, possible that other, possibly non-unimodal, distributions exist that are also $B$-\nice{}.
\end{rem*}

\begin{rem}[Common Fading Distributions]\label{rem:fading-dist-condition1}
	Most of the common fading distributions fulfill the sufficient conditions from Lemmas~\ref{lem:dist-cond-w} and \ref{lem:dist-cond-arch-copula}. In the case of Rayleigh fading, the channel gains $\X_i$ are exponentially distributed and therefore have a monotone density, i.e., $f'(x)<0$ for all $x$.
	Other common distributions like log-normal, Gamma, and $\chi^2$ also fulfill the conditions.
	An example of a distribution that is \emph{not} $B$-\nice{} for the considered $B$, is a Weibull distribution with scale parameter $\lambda=1$ and shape parameter $k=6$ for $n=2$. For this example, $f'(\inv{F}(0.5))\approx 1.98 > 0$.
	The mentioned examples are also illustrated in \cite{BesserGitlab}.
\end{rem}

\subsubsection{Outer Bounds on the Maximum ZOC}
First, we will give two outer bounds on the maximum \gls{zoc}. The first one is based on the Fr\'{e}chet-Hoeffding lower bound $W$. Even though it is not a valid copula for $n>2$, the bound from Theorem~\ref{thm:frechet-hoeffding-bounds} still holds and can therefore be used as a loose upper bound on the actual maximum \gls{zoc}.
\begin{thm}[Outer Bound on the Maximum \Gls{mrc}-\Gls{zoc} for $n$ Homogeneous Links based on $W$] \label{thm:mrc-outer-bound-zoc-n-links-w}
	Let $\X_1, \dots{}, \X_n$ be $n$ non-negative continuous random variables representing the channel gains of $n$ communication channels. All $\X_i$ follow the same distribution, i.e., $F_{\X_i}=F_{\X}$ for $i=1, \dots{}, n$, which is a $B$-\nice{} distribution for $B(x_1,\dots{},x_n)=\sum_{i=1}^{n}F_{\X}(x_i)-n+1$.
	The maximum zero-outage capacity $\zerorate$ for \gls{mrc} at the receiver is then upper bounded by
	\begin{equation}\label{eq:mrc-outer-bound-zoc-n-links-w}
		\overline{\overline{\zerorate_n}} = \log_2\left(1 + n\inv{F_{\X}}\left(1-\frac{1}{n}\right)\right)\,.
	\end{equation}
\end{thm}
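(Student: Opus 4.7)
The plan is to exploit the Fréchet–Hoeffding lower bound $W$ in order to upper bound the region $\mathcal{B}$ in which the copula of $(\X_1,\dots,\X_n)$ can vanish, and then invoke the $B$-\nice{} assumption to explicitly solve the resulting boundary optimization.

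\textbf{Step 1: Constrain $\mathcal{B}$ via the Fréchet–Hoeffding lower bound.} For any copula $C$, Theorem~\ref{thm:frechet-hoeffding-bounds} gives $C(u)\geq W(u) = \max\{u_1+\cdots+u_n-n+1,0\}$. Hence, if $C(F_{\X}(x_1),\dots,F_{\X}(x_n))=0$ at some point $(x_1,\dots,x_n)$, then necessarily $W(F_{\X}(x_1),\dots,F_{\X}(x_n))=0$, i.e., $\sum_{i=1}^{n} F_{\X}(x_i)\leq n-1$. Applying the homogeneity assumption $F_{\X_i}=F_{\X}$, this yields the set inclusion
\begin{equation*}
\mathcal{B}\;\subseteq\;\mathcal{B}_W := \Bigl\{(x_1,\dots,x_n)\in\reals_{+}^{n}\;\Bigl|\;\sum_{i=1}^{n} F_{\X}(x_i)\leq n-1\Bigr\}\,.
\end{equation*}
This holds uniformly over all admissible joint distributions, so $\mathcal{B}_W$ is the largest region one could possibly use, even though $W$ itself is not a copula for $n>2$.

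\textbf{Step 2: Translate the inclusion into an upper bound on $s^{\star}$.} Following the reformulation in \eqref{eq:opt-problem-boundary}, $s^{\star}$ is the minimum of $L_{\text{MRC}}=\sum_i x_i$ over the boundary of $\mathcal{B}$. Since $\mathcal{B}\subseteq\mathcal{B}_W$, enlarging the feasible set can only decrease the minimum, so
\begin{equation*}
s^{\star}\;\leq\;\min_{(x_1,\dots,x_n)\in\reals_{+}^{n}}\sum_{i=1}^{n}x_i\quad\text{s.\,t.}\quad \sum_{i=1}^{n} F_{\X}(x_i)-n+1=0,
\end{equation*}
which is precisely the optimization \eqref{eq:opt-problem-nice-dist} with $B(x_1,\dots,x_n)=\sum_{i=1}^{n} F_{\X}(x_i)-n+1$.

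\textbf{Step 3: Close the optimization using the $B$-\nice{} hypothesis.} By assumption, $F_{\X}$ is $B$-\nice{} for this choice of $B$, so by Definition~\ref{def:nice-distribution} the minimizer lies on the identity line $x_1=\cdots=x_n=x^{\star}$. The constraint then reduces to $n F_{\X}(x^{\star})-n+1=0$, giving $x^{\star}=\inv{F_{\X}}\!\left(1-1/n\right)$, and hence $s^{\star}\leq n\,\inv{F_{\X}}\!\left(1-1/n\right)$. Converting back to a rate via $\zerorate=\log_{2}(1+s^{\star})$ yields the claimed bound \eqref{eq:mrc-outer-bound-zoc-n-links-w}.

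The only nontrivial step is the first one, since one must be careful that $W$, although not a copula for $n>2$, still serves as a valid pointwise lower bound on every genuine copula, and that this suffices to bound the \emph{region} $\mathcal{B}$ rather than the copula itself. Once this set-theoretic argument is in place, Steps 2 and 3 are essentially a direct invocation of the definition of $B$-\nice{} and Lemma~\ref{lem:dist-cond-w}, which guarantees that common fading distributions (Rayleigh, Nakagami-$m$, etc.) satisfy the required hypothesis.
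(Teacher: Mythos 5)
Your proof is correct and follows essentially the same route as the paper's: reduce to the boundary optimization \eqref{eq:opt-problem-nice-dist} with $B(x_1,\dots,x_n)=\sum_{i=1}^{n}F_{\X}(x_i)-n+1$ and invoke the $B$-\nice{} hypothesis to place the tangent point on the identity line, yielding $s^\star = n\inv{F_{\X}}(1-1/n)$. Your Step~1 is in fact more explicit than the paper's appendix, which leaves the justification that the non-copula $W$ still yields a valid outer bound (via $C\geq W$ forcing $\mathcal{B}\subseteq\mathcal{B}_W$) to a remark in the main text.
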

\begin{proof}
	The proof can be found in Appendix~\ref{app:proof-thm-mrc-outer-w}.
\end{proof}

As already mentioned, $W$ is not a copula in the case of $n>2$. The outer bound from Theorem~\ref{thm:mrc-outer-bound-zoc-n-links-w} will therefore be loose.
In the following examples in Section~\ref{sub:rayleigh-mrc} and \ref{sub:nakagami-mrc}, we will see that the gap to the actual maximum \gls{zoc} can grow arbitrarily large for large $n$.
A better outer bound can be obtained by using \emph{joint mixability}~\cite{Wang2016}. Based on \cite[Thm.~2.6]{Wang2013}, a general bound on the best-case $\varepsilon$-outage capacity is derived in \cite[Thm.~7]{Besser2020twc}.
We will use this to derive an outer bound on the maximum \gls{zoc} in the following corollary.
\begin{cor}[{Outer Bound on the Maximum \Gls{mrc}-\Gls{zoc} for $n$ Homogeneous Links based on \cite[Thm.~7]{Besser2020twc}}] \label{cor:mrc-outer-bound-zoc-n-links-joint-mix}
	Let $\X_1, \dots{}, \X_n$ be $n$ non-negative continuous random variables representing the channel gains of $n$ communication channels. All $\X_i$ follow the same distribution, i.e., $F_{\X_i}=F_{\X}$ for $i=1, \dots{}, n$.
	The maximum zero-outage capacity $\zerorate$ for \gls{mrc} at the receiver is then upper bounded by
	\begin{equation}\label{eq:mrc-outer-bound-zoc-n-links-joint-mix}
		{\overline{\zerorate_n}} = \log_2\left(1 + n\expect{\X}\right)\,.
	\end{equation}
\end{cor}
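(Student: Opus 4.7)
The plan is to observe that this corollary follows either as the specialization $\varepsilon = 0$ of the general $\varepsilon$-outage bound in \cite[Thm.~7]{Besser2020twc}, or directly via a short self-contained argument based on the fact that the essential infimum of a non-negative random variable cannot exceed its mean. Since the bound is very clean, I would favor giving the direct argument and then noting that it is consistent with (and a special case of) the cited theorem.

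First, I would unpack what a zero-outage rate $R^0$ corresponds to under MRC. By the reformulation in \eqref{eq:def-prob-eps-outage-capac} with $\varepsilon = 0$, we must have $s = 2^{R^0}-1$ satisfy $\Pr\bigl(\sum_{i=1}^{n} \X_i < s\bigr) = 0$, i.e.\ $\sum_{i=1}^{n} \X_i \geq s$ almost surely under the joint distribution of $(\X_1, \dots, \X_n)$. Next, I would take expectations on both sides. Linearity of expectation is invariant under the choice of copula (it depends only on the marginals), so
\begin{equation*}
	s \;\leq\; \expect{\sum_{i=1}^{n}\X_i} \;=\; \sum_{i=1}^{n}\expect{\X_i} \;=\; n\,\expect{\X},
\end{equation*}
where the last equality uses the assumption that all $\X_i$ share the common marginal $F_{\X}$. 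Substituting back into $R^0 = \log_2(1+s)$ and using the monotonicity of $\log_2(1+\cdot)$ yields
\begin{equation*}
	R^0 \;\leq\; \log_2\bigl(1 + n\,\expect{\X}\bigr) \;=\; \overline{\zerorate_n},
\end{equation*}
which is the claimed bound. The supremum over joint distributions in the definition of the maximum \gls{zoc} does not change the argument, since the inequality holds for \emph{every} admissible joint distribution with the prescribed marginals.

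Finally, I would remark that this matches the general $\varepsilon$-outage bound \cite[Thm.~7]{Besser2020twc}, which is obtained through joint mixability arguments \cite[Thm.~2.6]{Wang2013}; taking $\varepsilon \to 0$ in that theorem (under the continuity assumption on $F_{\X}$) collapses the outer bound to exactly $\log_2(1 + n\expect{\X})$, so no new machinery is required here. There is no substantive obstacle in this proof: it is essentially a one-line application of Markov-type reasoning to the sum $\sum_i \X_i$, with the only subtle point being that the bound is marginal-only and therefore uniform over all copulas, which is precisely why it serves as an outer bound for the \emph{maximum} \gls{zoc}.
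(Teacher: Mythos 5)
Your proof is correct, but it takes a different route from the paper: the paper's entire proof is a one-line citation, specializing the general $\varepsilon$-outage bound of \cite[Thm.~7]{Besser2020twc} (itself derived via joint mixability from \cite[Thm.~2.6]{Wang2013}) to $\varepsilon=0$, whereas you give a self-contained elementary argument. Your key step --- that $\Pr\left(\sum_i \X_i < s\right)=0$ forces $\sum_i \X_i \geq s$ almost surely, hence $s \leq \expect{\sum_i \X_i} = n\expect{\X}$ by linearity of expectation regardless of the copula --- is exactly the ``essential infimum does not exceed the mean'' observation, and it is valid for every admissible joint distribution, so it does bound the maximum $\zerorate$. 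What your approach buys is transparency and independence from the joint-mixability machinery: the reader sees immediately why the bound depends only on the marginals and why it is uniform over all dependency structures. What the paper's approach buys is the connection to the full $\varepsilon$-outage family of bounds and to the tightness discussion (complete mixability), which explains \emph{when} the bound is approached; your direct argument establishes the inequality but says nothing about achievability. Both are sound; yours is arguably the more instructive proof of this particular corollary.
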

\begin{proof}
	The proof directly follows from \cite[Thm.~7]{Besser2020twc}. We only need to set $\varepsilon=0$ and obtain \eqref{eq:mrc-outer-bound-zoc-n-links-joint-mix}.
\end{proof}

As stated in \cite{Wang2013}, the bound is tight, if the distribution $F_{\X}$ is $n$-completely mixable. Unfortunately, as shown in \cite[Rem.~2.2]{Wang2016}, distributions with a one-sided unbounded support can not be completely mixable. For details on this matter, we refer the readers to \cite{Wang2016,Puccetti2012}.
However, as shown in \cite{Besser2020twc}, the bound from Corollary~\ref{cor:mrc-outer-bound-zoc-n-links-joint-mix} might come arbitrarily close to the exact value in some special cases when $n\to\infty$.
We will also observe this behavior for the Rayleigh fading example in Section~\ref{sub:rayleigh-mrc}.

\subsubsection{Inner Bound on the Maximum ZOC}
Next, we derive an inner bound on the maximum \gls{zoc}.
It is based on the Archimedean copula stated in~\cite[Prop.~4.6]{McNeil2009}.
Archimedean copulas are a popular class of single-parameter copulas for an arbitrary number of dimensions $n$.
The simple construction for an arbitrary dimension $n>2$ is one of the reasons for their popularity~\cite[Chap.~4]{Nelsen2006}.
In the following, we need this extension to $n>2$ and since it is a valid copula, the derived value is achievable and we obtain an inner bound.
We use this particular copula since it is a lower bound on all Archimedean copulas~\cite[Prop.~4.6]{McNeil2009}.

Again, to obtain closed-form solutions, we assume that $F_{\X}$ is a $B$-\nice{} distribution.
However, with the mentioned copula, an inner bound can also be obtained for arbitrary fading distributions.

\begin{thm}[Inner Bound on the Maximum \Gls{mrc}-\Gls{zoc} for $n$ Homogeneous Links]\label{thm:mrc-inner-bound-zoc-n-links}
	Let $\X_1, \dots{}, \X_n$ be $n$ non-negative continuous random variables representing the channel gains of $n$ communication channels. All $\X_i$ follow the same distribution, i.e., $F_{\X_i}=F_{\X}$ for $i=1, \dots{}, n$, which is a $B$-\nice{} distribution for $B(x_1,\dots{},x_n)=\sum_{i=1}^{n}\left(F_{\X}(x_i)\right)^{\frac{1}{n-1}}-n+1$.
	The maximum zero-outage capacity $\zerorate$ for \gls{mrc} at the receiver is then lower bounded by
	\begin{equation}\label{eq:mrc-inner-bound-zoc-n-links}
		\underline{\zerorate_n} = \log_2\left(1 + n\inv{F_{\X}}\left(\left(1-\frac{1}{n}\right)^{n-1}\right)\right)\,.
	\end{equation}
\end{thm}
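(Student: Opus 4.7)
The plan is to exhibit an explicit joint distribution for $(\X_1,\dots,\X_n)$ whose zero-outage capacity equals the right-hand side of \eqref{eq:mrc-inner-bound-zoc-n-links}; since the resulting copula is valid, the achieved value is automatically a lower bound on the maximum ZOC. The candidate is the $n$-dimensional Archimedean copula from \cite[Prop.~4.6]{McNeil2009}, which serves as the pointwise lower bound on all Archimedean copulas. By Sklar's theorem (Theorem~\ref{thm:sklar}) we couple this copula with the common marginal $F_{\X}$; the form of this particular copula is chosen precisely so that its zero-set, translated via the marginals, gives the boundary function
\begin{equation*}
B(x_1,\dots,x_n) = \sum_{i=1}^n \bigl(F_{\X}(x_i)\bigr)^{\frac{1}{n-1}} - (n-1),
\end{equation*}
i.e., the joint CDF vanishes on $\{B\leq 0\}$ and is positive on $\{B>0\}$.

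Next, I would apply the reformulation \eqref{eq:opt-problem-boundary} with $L=L_{\text{MRC}}$. For the joint distribution just constructed, the maximum ZOC is obtained by solving
\begin{equation*}
s^\star = \min_{(x_1,\dots,x_n)} \sum_{i=1}^n x_i \quad \text{subject to} \quad B(x_1,\dots,x_n)=0.
\end{equation*}
Because $F_{\X}$ is a $B$-\nice{} distribution for this exact $B$ (by hypothesis, and with sufficient conditions supplied by Lemma~\ref{lem:dist-cond-arch-copula}), Definition~\ref{def:nice-distribution} collapses this multivariable problem to the diagonal: the minimizer satisfies $x_1=\cdots=x_n=x^\star$ and $s^\star=n x^\star$.

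Substituting $x_1=\cdots=x_n=x^\star$ into $B=0$ yields $n\,(F_{\X}(x^\star))^{1/(n-1)}=n-1$, hence
\begin{equation*}
F_{\X}(x^\star) = \left(1-\tfrac{1}{n}\right)^{n-1}, \qquad x^\star = \inv{F_{\X}}\!\left(\left(1-\tfrac{1}{n}\right)^{n-1}\right).
\end{equation*}
Therefore $s^\star = n\inv{F_{\X}}\bigl((1-1/n)^{n-1}\bigr)$ and the corresponding rate $\log_2(1+s^\star)$ matches \eqref{eq:mrc-inner-bound-zoc-n-links}. Since this rate is achievable by a specific joint distribution, it lower bounds the maximum ZOC, establishing the claim.

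The main obstacle I anticipate is making precise the shape of the zero-set of the Archimedean copula from \cite[Prop.~4.6]{McNeil2009}; the argument needs the non-strict generator whose pseudo-inverse is compactly supported, so that the copula actually attains the value zero on a region with non-empty interior (otherwise there is no $\mathcal{B}$ to work with and no positive ZOC). Once the generator and its pseudo-inverse are written out explicitly and the zero-set is identified with $\{B\leq 0\}$ above, the remainder of the argument is a routine application of the $B$-\nice{} reduction, which is exactly what Lemma~\ref{lem:dist-cond-arch-copula} was set up to justify.
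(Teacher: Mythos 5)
Your proposal is correct and follows essentially the same route as the paper: the paper's proof likewise takes the Archimedean lower-bound copula $C_n(u_1,\dots,u_n) = \left(\positive{\sum_{i=1}^{n} u_i^{\frac{1}{n-1}}-n+1}\right)^{n-1}$ from \cite[Prop.~4.6]{McNeil2009}, notes it is a valid copula so the resulting value is achievable, and then repeats the tangent-hyperplane/identity-line argument of Theorem~\ref{thm:mrc-outer-bound-zoc-n-links-w} with the new boundary function. The obstacle you flag about the zero-set dissolves once the copula is written in the explicit form above, since its vanishing region is exactly $\sum_i u_i^{1/(n-1)} \leq n-1$.
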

\begin{proof}
	The proof can be found in Appendix~\ref{app:proof-thm-mrc-inner}.
\end{proof}

\subsubsection{Gap Between Inner and Outer Bound}
For distributions that are $B$-\nice{} distributions for the boundary function in Theorem~\ref{thm:mrc-inner-bound-zoc-n-links}, we know that the exact value of the maximum \gls{zoc} is between the outer bound given in Corollary~\ref{cor:mrc-outer-bound-zoc-n-links-joint-mix} and the inner bound from Theorem~\ref{thm:mrc-inner-bound-zoc-n-links}.
It is therefore of interest to analyze the gap between the bounds.
This will be summarized in the following corollary.
\begin{cor}[Maximum Gap between Inner and Outer Bound on the Maximum \Gls{mrc}-\Gls{zoc}]\label{cor:mrc-max-gap-inner-outer-bounds-n-links}
	Let $\X_1, \dots{}, \X_n$ be $n$ non-negative continuous random variables representing the channel gains of $n$ communication channels. All $\X_i$ follow the same distribution, i.e., $F_{\X_i}=F_{\X}$ for $i=1, \dots{}, n$, which is $B$-\nice{} for the function $B$ from Theorem~\ref{thm:mrc-inner-bound-zoc-n-links}.
	The gap between the inner bound on the maximum \gls{zoc} from Theorem~\ref{thm:mrc-inner-bound-zoc-n-links} and the outer bound from Corollary~\ref{cor:mrc-outer-bound-zoc-n-links-joint-mix} is at most
	\begin{equation}\label{eq:mrc-max-gap-inner-outer-bounds-n-links}
		\overline{\zerorate_n} - \underline{\zerorate_n} \leq \log_2\left(\frac{\expect{\X}}{\inv{F_{\X}}(\inv{\e})}\right)\,.
	\end{equation}
\end{cor}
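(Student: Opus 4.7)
The plan is to combine the explicit expressions for the outer bound from Corollary~\ref{cor:mrc-outer-bound-zoc-n-links-joint-mix} and the inner bound from Theorem~\ref{thm:mrc-inner-bound-zoc-n-links}, rewrite their difference as a single logarithm, and then reduce it to the claimed form through two elementary estimates.

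First, I would write
\begin{equation*}
\overline{\zerorate_n} - \underline{\zerorate_n} = \log_2 \frac{1 + n\,\expect{\X}}{1 + n\,\inv{F_{\X}}\bigl((1-1/n)^{n-1}\bigr)}.
\end{equation*}
Under the assumptions of the corollary both bounds apply to the same maximum \gls{zoc}, so $\expect{\X} \geq \inv{F_{\X}}((1-1/n)^{n-1}) > 0$. This lets me invoke the elementary inequality $(a+c)/(b+c) \leq a/b$, valid whenever $a \geq b > 0$ and $c \geq 0$, to drop the additive $1$ in numerator and denominator and obtain
\begin{equation*}
\overline{\zerorate_n} - \underline{\zerorate_n} \leq \log_2 \frac{\expect{\X}}{\inv{F_{\X}}\bigl((1-1/n)^{n-1}\bigr)}.
\end{equation*}

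Second, I would bound the denominator from below by $\inv{F_{\X}}(1/\e)$. Since $\inv{F_{\X}}$ is non-decreasing, it suffices to show $(1-1/n)^{n-1} \geq 1/\e$ for every integer $n \geq 2$. This is the classical fact that the sequence $a_n = ((n-1)/n)^{n-1}$ is monotonically decreasing with limit $1/\e$; monotonicity can be verified by taking logarithms and applying $\ln(1-1/n) \geq -1/(n-1)$ for $n \geq 2$, and the limit is standard. Substituting $\inv{F_{\X}}((1-1/n)^{n-1}) \geq \inv{F_{\X}}(1/\e)$ into the logarithm yields the claimed bound.

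The argument is essentially routine and I anticipate no serious obstacle. The only point deserving explicit comment is the justification for dropping the additive $1$ terms, for which I would rely on the sandwich $\underline{\zerorate_n} \leq \zerorate \leq \overline{\zerorate_n}$ provided by Theorem~\ref{thm:mrc-inner-bound-zoc-n-links} and Corollary~\ref{cor:mrc-outer-bound-zoc-n-links-joint-mix} under the $B$-\nice{} hypothesis; in the degenerate case in which this sandwich collapses the gap is non-positive and the corollary holds trivially.
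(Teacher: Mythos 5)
Your proof is correct but establishes the bound by a genuinely different argument than the paper's. Both begin with the same identity
\begin{equation*}
\overline{\zerorate_n} - \underline{\zerorate_n} = \log_2\left(\frac{1 + n\expect{\X}}{1 + n\inv{F_{\X}}\left(\left(1-\frac{1}{n}\right)^{n-1}\right)}\right),
\end{equation*}
but the paper then simply asserts that this quantity is increasing in $n$ and bounds it by its limit as $n\to\infty$; that monotonicity claim is left unproven (it does hold, but verifying it requires combining the growth of the factor $n$, the decrease of $\inv{F_{\X}}\left(\left(1-1/n\right)^{n-1}\right)$, and the ordering $\expect{\X}\geq\inv{F_{\X}}\left(\left(1-1/n\right)^{n-1}\right)$, so it is not free). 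You instead bound the expression for each fixed $n$ via two elementary estimates: dropping the additive ones using $(a+c)/(b+c)\leq a/b$ for $a\geq b>0$ and $c\geq 0$, with $a\geq b$ justified by the sandwich between the achievable inner bound and the valid outer bound, and then $\left(1-1/n\right)^{n-1}\geq 1/\e$ from $\log(1-1/n)\geq -1/(n-1)$. One small remark: that last logarithmic inequality yields $\left(1-1/n\right)^{n-1}\geq 1/\e$ directly, so the monotonicity of the sequence that you invoke is not actually needed. Your route buys a fully justified, per-$n$ bound that sidesteps the paper's unproven monotonicity step; the paper's route, once completed, additionally shows that the gap approaches the stated value as $n\to\infty$, i.e., that the bound is asymptotically tight, which your inequality chain alone does not exhibit.
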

\begin{proof}
	The proof can be found in Appendix~\ref{app:proof-thm-mrc-max-gap}.
\end{proof}
Corollary~\ref{cor:mrc-max-gap-inner-outer-bounds-n-links}, therefore, allows us to calculate the maximum \gls{zoc} for $n$ homogeneous fading links and \gls{mrc} at the receiver within a finite amount of bits equal to $\log_2\left({\expect{\X}}\right) - \log_2\left({\inv{F_{\X}}(1/\e)}\right)$.

\subsection{Example: Rayleigh Fading}\label{sub:rayleigh-mrc}
In the following, we will illustrate the general results with the example of Rayleigh fading. In this case, all channel gains $\abs{{H}_i}^2$ are exponentially distributed with mean $1$. This gives $\snr_i\abs{{H}_i}^2=\X_i\sim\exp(1/\snr_i)$.
The \gls{cdf} and inverse \gls{cdf} of $\X_i\sim\exp(\lambda_i)$ are given by
\begin{equation*}
	F_{\X_i}(x) = \begin{cases}
		0 & \text{if } x<0\\
		1-\exp\left(-\lambda_i x\right) & \text{if } x\geq 0
	\end{cases}%
\end{equation*}
and
\begin{equation*}
	\inv{F_{\X_i}}(u) = \begin{cases}
		\frac{-\log(1-u)}{\lambda_i} & \text{if } 0\leq u < 1\\
		+\infty & \text{if } u = 1\,,
	\end{cases}%
\end{equation*}
respectively. Note that the expected value of $\X_i$ is $1/\lambda_i=\snr_i$.

\subsubsection{Two Links}
First, we will take a look at the two-dimensional case.
In order to do this, we will evaluate \eqref{eq:mrc-zero-outage-capacity-two-links} from Theorem~\ref{thm:mrc-achievable-zero-out-two-links} in the following.

We start with determining $x^\star$ according to \eqref{eq:condition-x-star}. This gives the following expression
\begin{equation*}
	x^\star(t) = \frac{-1}{\lx}\log\left(\frac{\ly}{\lx+\ly}(2-t)\right)\,.
\end{equation*}
Note that the range of $x^\star$ is bounded by \num{0} and $\inv{F_{\X_1}}(t)$.
The boundary of $\mathcal{B}$ is computed according to \eqref{eq:def-b} as
\begin{equation*}
	B_t(x) = -\frac{\log\left(2-t-\exp(-\lx x)\right)}{\ly}\,,
\end{equation*}
and therefore, we get
\begin{equation}
	B_t(x^\star) = -\frac{\log\left((2-t)\frac{\lx}{\lx+\ly}\right)}{\ly}\,,
\end{equation}
when $0<x^\star<\inv{F_{\X_1}}(t)$. For the extreme cases, we find that
\begin{equation*}
	x^\star+B(x^\star) = \begin{cases}
		\inv{F_{\X_2}}(t) & \text{if } x^\star = 0\\
		\inv{F_{\X_1}}(t) & \text{if } x^\star = \inv{F_{\X_1}}(t)\\
	\end{cases}\,.
\end{equation*}
Thus, we can combine the above results according to \eqref{eq:mrc-zero-outage-capacity-two-links} to get the expression of the \gls{zoc} for two Rayleigh fading links as
\begin{equation}\label{eq:mrc-rayleigh-zero-cap}
	\zerorate(t) = \log_2\left(1 + x^\star(t) -\frac{\log\left(2-t-\exp(-\lx x^\star(t))\right)}{\ly}\right)
\end{equation}
with
\begin{equation}\label{eq:rayleigh-x-star}
	x^\star(t) = \positive{\min\left\{\frac{-1}{\lx}\log\left(\frac{\ly}{\lx+\ly}(2-t)\right), \inv{F_{\X_1}}(t)\right\}}.
\end{equation}

Figure~\ref{fig:rayleigh-boundary-t} shows examples for the function $B_t$ and the possible candidates for $s=2^{\zerorate}-1$. Recall that the idea is to find the line $x_1+x_2=s$ with the maximum $s$ such that the line is still below $B_t$.
The different curves are shown for two values of $t$, namely $t=0.5$ and $t=0.9$.
It can be seen that the values $\inv{F_{\X_1}}(t)$ and $\inv{F_{\X_2}}(t)$ increase with increasing $t$.
These are the first candidates for the optimal $\zerorate$ from \eqref{eq:mrc-zero-outage-capacity-two-links}.
They are represented by the lines $x_1+x_2=\inv{F_{\X_1}}(t)$ and $x_1+x_2=\inv{F_{\X_2}}(t)$, respectively.
In the case of $t=0.5$, the optimal $s$ is given by $\inv{F_{\X_2}}(0.5)$ since there is no other line of the form $x_1+x_2=s$ with a larger $s$ which is still below $B_t(x)$.
In contrast, the line $x_1+x_2=\inv{F_{\X_1}}(t)$ is not below $B_t$.
For the larger value of $t=0.9$, there exists a tangent point at around $x^\star=0.18$ which gives the maximum $s=2^{\zerorate}-1$ of around \num{0.6}.
This can then be used to determine the \gls{zoc} $\zerorate$.
\begin{figure}
	\centering
	\begin{tikzpicture}
\begin{axis}[
	width=.95\linewidth,
	height=.25\textheight,
	xlabel={$x_1$},
	ylabel={$x_2$},
	domain=0:2,
	xmin=0,
	xmax=1.5,
	ymin=0,
	ymax=.8,
	legend cell align=left,
	ymajorgrids,
	xmajorgrids,
	xminorgrids,
	grid style={line width=.1pt, draw=gray!20},
	major grid style={line width=.25pt,draw=gray!40},
]

\addplot[plot3, thick, samples=100] {-ln(2-0.5-exp(-1*x))/3.162};
\addlegendentry{$B_t(x)$ for $t=0.5$};

\addplot[plot3, dashed, thick] {0.21919-x};
\addlegendentry{$x_1+x_2 = \inv{F_{\X_2}}(0.5)$};

\addplot[plot3, densely dotted, thick] {0.693147-x};
\addlegendentry{$x_1+x_2 = \inv{F_{\X_1}}(0.5)$};

\addplot[plot1, thick, samples=250] {-ln(2-0.9-exp(-1*x))/3.162};
\addlegendentry{$B_t(x)$ for $t=0.9$};
\addplot[plot1, dashed, thick] {0.7281-x};
\addlegendentry{$x_1+x_2 = \inv{F_{\X_2}}(0.9)$};
\addplot[plot1, densely dotted, thick] {0.60028-x};
\addlegendentry{$x_1+x_2 = x^\star+B(x^\star)$};

\end{axis}
\end{tikzpicture}
	\caption{Boundary $B_t(x)$ for Rayleigh fading channels with \glspl{snr} $\snrx=\SI{0}{\decibel}$ and $\snry=\SI{-5}{\decibel}$ for the values $t=0.5$ and $t=0.9$.}
	\label{fig:rayleigh-boundary-t}
\end{figure}

The \gls{zoc} $\zerorate$ from \eqref{eq:mrc-rayleigh-zero-cap} is shown for different values of $\snrx$ and $\snry$ in Fig.~\ref{fig:rayleigh-zero-cap}.
As expected, the \gls{zoc} increases for increasing \gls{snr} values $\rho_i$, since the channel quality increases.
An interesting phenomenon can be seen from the asymmetric constellations of $\snrx$ and $\snry$.
Especially, if there is a big difference between them, e.g., $\snrx=\SI{-5}{\decibel}$ and $\snry=\SI{10}{\decibel}$, the \gls{zoc} is low and grows only slowly for small $t$.
However, for larger $t$, the rate of growth increases.
The reason for this is that $\zerorate$ is only determined by the weaker channel for small $t$.
This can easily be seen when comparing the case of $\snrx=\SI{-5}{\decibel}$ and $\snry=\SI{10}{\decibel}$ with the case of $\snrx=\SI{-5}{\decibel}$ and $\snry=\SI{5}{\decibel}$.
For $t$ up to around \num{0.9}, both cases achieve the same \gls{zoc} $\zerorate$, since their weaker channel has the same \gls{snr} of \SI{-5}{\decibel}.
However, for $t>0.9$, the constellation with the better second channel, i.e., higher $\snry$, is able to achieve a larger $\zerorate$.
The same behavior can be seen for the cases of $\snrx=\snry=\SI{0}{\decibel}$ and $\snrx=\SI{0}{\decibel}$, $\snry=\SI{5}{\decibel}$.
The only difference is that the value of $t$ at which the curves start to differ is lower, at around \num{0.1}.
This can also be seen from Fig.~\ref{fig:rayleigh-grid-snr}, where the \glspl{zoc} of combinations of $\snrx$ and $\snry$ are shown for $t=0.5$.
All of the presented figures are also available as interactive versions at \cite{BesserGitlab}.
We encourage the interested readers to change the parameters on their own and explore the behavior of the presented results.
\begin{figure}[!t]
	\centering
	\subfloat[{\Gls{zoc} for different copula parameters $t$.}]{\begin{tikzpicture}%
\begin{axis}[
	width=.95\linewidth,
	height=.25\textheight,
	xlabel={Copula Parameter $t$},
	ylabel={$\zerorate(t)$},
	xmin=0,
	xmax=1,
	ymin=0,
	legend pos=north west,
	legend style={font=\small}, %
	ymajorgrids,
	xmajorgrids,
	xminorgrids,
	grid style={line width=.1pt, draw=gray!20},
	major grid style={line width=.25pt,draw=gray!40},
]
	\addplot[plot0,thick,mark=*, mark repeat=3] table[x=t,y=capac] {data/zero-out-capac-rayleigh-snrx0.0-snry0.dat};
	\addlegendentry{$\snrx=\snry=\SI{0}{\decibel}$};
	
	\addplot[plot1,thick,mark=pentagon*, mark repeat=3, mark phase=0] table[x=t,y=capac] {data/zero-out-capac-rayleigh-snrx5.0-snry5.0.dat};
	\addlegendentry{$\snrx=\snry=\SI{5}{\decibel}$};
	
	\addplot[plot2,thick,mark=square*, mark repeat=3, mark phase=0] table[x=t,y=capac] {data/zero-out-capac-rayleigh-snrx0.0-snry5.0.dat};
	\addlegendentry{$\snrx=\SI{0}{\decibel}, \snry=\SI{5}{\decibel}$};
	
	\addplot[plot3,thick,mark=triangle*, mark repeat=3, mark phase=0] table[x=t,y=capac] {data/zero-out-capac-rayleigh-snrx-5.0-snry5.0.dat};
	\addlegendentry{$\snrx=\SI{-5}{\decibel}, \snry=\SI{5}{\decibel}$};
	
	\addplot[plot4,thick,mark=diamond*, mark repeat=3, mark phase=0] table[x=t,y=capac] {data/zero-out-capac-rayleigh-snrx-5.0-snry10.0.dat};
	\addlegendentry{$\snrx=\SI{-5}{\decibel}, \snry=\SI{10}{\decibel}$};
\end{axis}
\end{tikzpicture}
		\label{fig:rayleigh-zero-cap}
	}

	\subfloat[{The copula parameter is set to $t=0.5$. The marked points correspond to the \gls{snr} combinations shown in Fig.~\ref{fig:rayleigh-zero-cap}.}]{\begin{tikzpicture}
\begin{axis}[
	width=.95\linewidth,
	height=.25\textheight,
	xlabel={\Gls{snr} $\snrx$ [dB]},
	ylabel={\Gls{snr} $\snry$ [dB]},
	view={0}{90},
	colormap name=viridis,
	point meta min=0,
	xmin=-10,
	xmax=10,
	ymin=-10,
	ymax=10,
	mark size=3,
	ymajorgrids,
	xmajorgrids,
	xminorgrids,
	grid style={line width=.1pt, draw=gray!20},
	major grid style={line width=.25pt,draw=gray!40},
]
    \addplot3[contour gnuplot={number=20, labels=true},
    		  mesh/rows=50,
    		  mesh/cols=50,
    		  patch type=bilinear,
    		 ] table {data/grid-zero-out-snr-t0.5.dat};
    \addplot[plot0, mark=*] coordinates {(0, 0)};
    \addplot[plot1, mark=pentagon*] coordinates {(5,5)};
    \addplot[plot2, mark=square*] coordinates {(0, 5)};
    \addplot[plot3, mark=triangle*] coordinates {(-5, 5)};
    \addplot[plot4, mark=diamond*] coordinates {(-5, 10)};
\end{axis}
\end{tikzpicture}
		\label{fig:rayleigh-grid-snr}
	}
	\caption{Achievable zero-outage capacities $\zerorate$ for two dependent Rayleigh fading channels with different \gls{snr} values $\snrx$ and $\snry$. The two channels follow the copula $C_t$.}\label{fig:rayleigh-mrc-copula-t}
\end{figure}
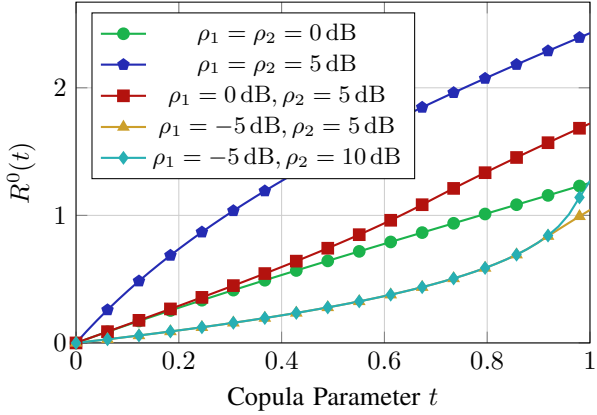
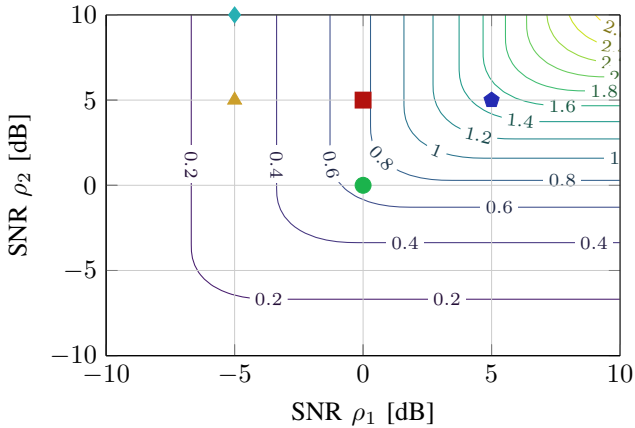

The maximum \gls{zoc} is achieved for countermonotonic $\X_1$ and $\X_2$. This corresponds to a value of $t=1$.
The joint distribution $F_{\X, \Y}$ in this case is supported on the line
\begin{equation*}
	x_2 = \inv{F_{\X_2}}(1-F_{\X_1}(x_1)) = \frac{-\log\left(1-\exp(-\lx x_1)\right)}{\ly}\,.
\end{equation*}
From \eqref{eq:rayleigh-x-star}, we get
$x^\star(1) = -\left(\log\frac{\ly}{\lx+\ly}\right)/{\lx}$
and together with \eqref{eq:mrc-rayleigh-zero-cap} this yields the maximum \gls{zoc}
\begin{equation}
	\zerorate(1) = \log_2\left(1 + \frac{\log\frac{\lx+\ly}{\ly}}{\lx} + \frac{\log\frac{\lx+\ly}{\lx}}{\ly}\right)\,.
\end{equation}
In the case that $\lx=\ly=1$, this evaluates to $\zerorate(1) = \log_2(1+2\log 2)$. The result for this special case was also derived in \cite[Thm.~1]{Jorswieck2019} and \cite[Ex.~3]{Besser2020twc} by different approaches.

\subsubsection{General Case}
For the extension to the $n$-dimensional case, we now assume homogeneous links, i.e., $F_{\X_1}=\cdots{}=F_{\X_n}=F_{\X}$. We can then apply the results from Section~\ref{sub:mrc-general-n-links} to find bounds on the maximum \gls{zoc}.
In addition, if $F_{\X}$ has a strictly monotone density, the exact values for the maximum \gls{zoc} are derived in \cite{Besser2020twc}. An example with this property is Rayleigh fading, since the channel gains $\X_i$ are exponentially distributed.
In the following example, we can therefore compare the exact values from \cite{Besser2020twc} to the bounds from Theorems~\ref{thm:mrc-outer-bound-zoc-n-links-w}, \ref{thm:mrc-inner-bound-zoc-n-links}, and Corollary~\ref{cor:mrc-outer-bound-zoc-n-links-joint-mix} for Rayleigh fading.
The results are shown in Fig.~\ref{fig:mrc-rayleigh-max-zoc-n-links}.

The first upper bound $\overline{\overline{\zerorate_n}}$, based on the Fr\'{e}chet-Hoeffding lower bound $W$, is evaluated according to \eqref{eq:mrc-outer-bound-zoc-n-links-w}.
The second upper bound from Corollary~\ref{cor:mrc-outer-bound-zoc-n-links-joint-mix} is calculated to
\begin{equation}
	\overline{\zerorate_n} = \log_2\left(1 + n\snr\right)\,.
\end{equation}
As shown in \cite{Besser2020twc}, the exact value approaches $\overline{\zerorate_n}$ for $n\to\infty$ in the case of Rayleigh fading.
The inner bound from Theorem~\ref{thm:mrc-inner-bound-zoc-n-links}, is evaluated to
\begin{equation}
	\underline{\zerorate_n} = \log_2\left(1 - \snr n \log\left(1 - \left(1 - \frac{1}{n}\right)^{n-1}\right)\right)\,.
\end{equation}
The exact value of the \gls{zoc} $\zerorate_n$ is between the inner and outer bound, i.e., $\underline{\zerorate_n}\leq \zerorate_n \leq \overline{\zerorate_n}$.
As described in Corollary~\ref{cor:mrc-max-gap-inner-outer-bounds-n-links}, the gap between the bounds is upper bounded by
\begin{equation}
	\lim\limits_{n\to\infty} \overline{\zerorate_n}-\underline{\zerorate_n} = -\log_2\left(1 - \log(\e-1)\right) \approx \SI{1.12}{bit}.
\end{equation}

\begin{figure}
	\centering
	\begin{tikzpicture}%
\begin{axis}[
	width=.98\linewidth,
	height=.25\textheight,
	xlabel={Number of Links $n$},
	ylabel={Maximum ZOC ${\zerorate_n}$},
	xmin=2,
	xmax=10,
	legend pos=north west,
	ymajorgrids,
	xmajorgrids,
	xminorgrids,
	grid style={line width=.1pt, draw=gray!20},
	major grid style={line width=.25pt,draw=gray!40},
]
	\addplot[plot0,thick,mark=*] table[x=n,y=exact] {data/bounds_zoc_rayleigh_symmetric.dat};
	\addlegendentry{Exact Values from \cite{Besser2020twc}};
	
	\addplot[plot1,thick,mark=pentagon*] table[x=n,y=fhW] {data/bounds_zoc_rayleigh_symmetric.dat};
	\addlegendentry{Outer Bound from Theorem~\ref{thm:mrc-outer-bound-zoc-n-links-w}};
	
	\addplot[plot3,thick,mark=square*] table[x=n,y=outerPhi] {data/bounds_zoc_rayleigh_symmetric.dat};
	\addlegendentry{Outer Bound from Corollary~\ref{cor:mrc-outer-bound-zoc-n-links-joint-mix}};
	
	\addplot[plot2,thick,mark=triangle*] table[x=n,y=inner] {data/bounds_zoc_rayleigh_symmetric.dat};
	\addlegendentry{Inner Bound from Theorem~\ref{thm:mrc-inner-bound-zoc-n-links}};
\end{axis}
\end{tikzpicture}
	\caption{Exact values and bounds on the maximum \gls{zoc} for $n$ Rayleigh fading links with $\snr=\SI{0}{\decibel}$.}
	\label{fig:mrc-rayleigh-max-zoc-n-links}
\end{figure}
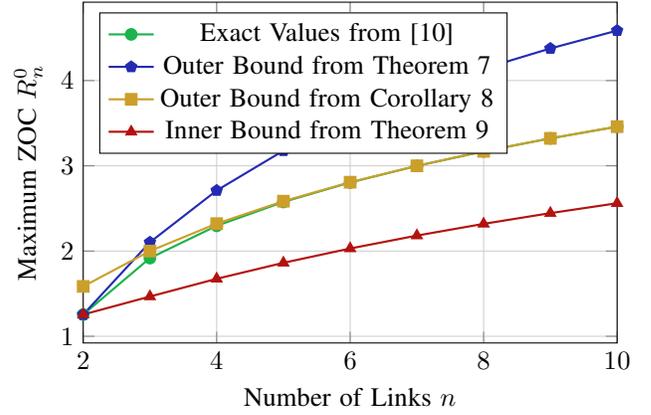

We want to emphasize that the exact values derived in \cite{Besser2020twc} only hold for monotonic densities of the channel gains $\X_i$. It can therefore not be used for all types of fading, e.g., it can not be used for log-normal fading.
Therefore the upper and lower bounds derived in the present work are useful and more general.

\subsection{\texorpdfstring{Example: Nakagami-$m$ Fading}{Example: Nakagami-m Fading}}\label{sub:nakagami-mrc}
We will now give an example of Nakagami-$m$ fading.
In this case, $\abs{{H}_i}$ is distributed according to a Nakagami-$m$ distribution, i.e., $\abs{{H}_i}\sim\text{Nakagami}(m, 1)$~\cite{Yacoub2002}.
Therefore, the channel gains $\X_i=\snr_i\abs{{H}_i}^2$ are distributed according to a Gamma-distribution, i.e., $\X_i\sim\Gamma(m, \snr_i/m)$.

The \glspl{zoc} are calculated numerically according to Theorem~\ref{thm:mrc-achievable-zero-out-two-links}. The source code of the calculations together with interactive versions of the presented figures can be found in \cite{BesserGitlab}.
Figure~\ref{fig:mrc-naka-zoc-copula-t} shows $\zerorate(t)$ for two Nakagami-$m$ fading channels with $m=5$ for different \gls{snr} combinations. As expected, the achievable \gls{zoc} increases with increasing \glspl{snr}.
For asymmetric \gls{snr} values $\snrx$ and $\snry$, we can observe a similar behavior compared to the case of Rayleigh fading discussed in the previous section. Up to a certain value of $t$, the achievable \gls{zoc} is only determined by the worse channel. In Fig.~\ref{fig:mrc-naka-zoc-copula-t} this can be seen, e.g., for $\snrx=\SI{0}{\decibel}$ and $\snry=\SI{0}{\decibel}$ or $\SI{5}{\decibel}$. Up to $t$ around \num{0.9}, both constellations achieve the same \gls{zoc}. Above this value, the scenario with the higher $\snry$ achieves higher $\zerorate$.
We encourage the interested readers to explore this behavior in the interactive version in \cite{BesserGitlab} with further parameter constellations.

For the homogeneous case with $n>2$, we select the parameters $m=5$ and $\snr=\SI{0}{\decibel}$.
The Gamma distribution with these parameters fulfills the inequality that the mode is less than the median,
\begin{equation*}
	\mode = \frac{m-1}{m}\snr = 0.8 < \median = 0.934\,,
\end{equation*}
and we can therefore apply Theorem~\ref{thm:mrc-outer-bound-zoc-n-links-w}.
It is also straightforward to confirm that the distribution fulfills the condition from Lemma~\ref{lem:dist-cond-arch-copula}, which allows us to use Theorem~\ref{thm:mrc-inner-bound-zoc-n-links}.
The results for this example are shown in Fig.~\ref{fig:mrc-nakagami-max-zoc-n-links}.
Similar to the case of Rayleigh fading, the gap between the inner bound $\underline{\zerorate_n}$ and the outer bound $\overline{\overline{\zerorate_n}}$ from Theorem~\ref{thm:mrc-outer-bound-zoc-n-links-w} grows indefinitely with increasing $n$.
On the other hand, the gap between the inner bound $\underline{\zerorate_n}$ and the outer bound $\overline{\zerorate_n}$ from Corollary~\ref{cor:mrc-outer-bound-zoc-n-links-joint-mix} is, according to Corollary~\ref{cor:mrc-max-gap-inner-outer-bounds-n-links}, always less than around \SI{0.328}{bit}.
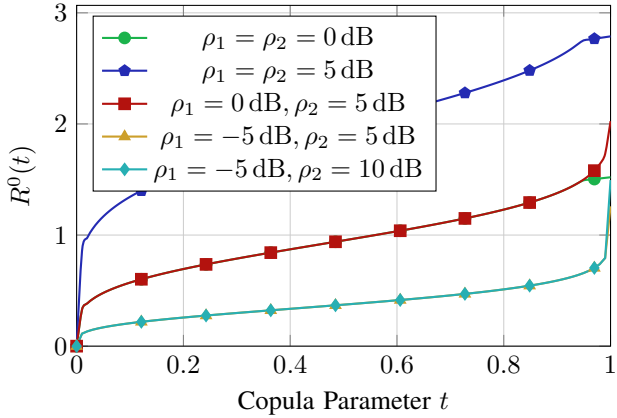
\begin{figure}
	\centering
	\begin{tikzpicture}%
\begin{axis}[
	width=.98\linewidth,
	height=.25\textheight,
	xlabel={Copula Parameter $t$},
	ylabel={$\zerorate(t)$},
	xmin=0,
	xmax=1,
	ymin=0,
	legend pos=north west,
	ymajorgrids,
	xmajorgrids,
	xminorgrids,
	grid style={line width=.1pt, draw=gray!20},
	major grid style={line width=.25pt,draw=gray!40},
]
	\addplot[plot0,thick,mark=*, mark repeat=12, smooth] table[x=t,y=capac] {data/zoc-x_naka5-y_naka5-snrx0-snry0.dat};
	\addlegendentry{$\snrx=\snry=\SI{0}{\decibel}$};
	
	\addplot[plot1,thick,mark=pentagon*, mark repeat=12, mark phase=0, smooth] table[x=t,y=capac] {data/zoc-x_naka5-y_naka5-snrx5.0-snry5.0.dat};
	\addlegendentry{$\snrx=\snry=\SI{5}{\decibel}$};
	
	\addplot[plot2,thick,mark=square*, mark repeat=12, mark phase=0, smooth] table[x=t,y=capac] {data/zoc-x_naka5-y_naka5-snrx0.0-snry5.0.dat};
	\addlegendentry{$\snrx=\SI{0}{\decibel}, \snry=\SI{5}{\decibel}$};
	\addplot[plot3,thick,mark=triangle*, mark repeat=12, mark phase=0] table[x=t,y=capac] {data/zoc-x_naka5-y_naka5-snrx-5.0-snry5.0.dat};
	\addlegendentry{$\snrx=\SI{-5}{\decibel}, \snry=\SI{5}{\decibel}$};
	\addplot[plot4,thick,mark=diamond*, mark repeat=12, mark phase=0] table[x=t,y=capac] {data/zoc-x_naka5-y_naka5-snrx-5.0-snry10.0.dat};
	\addlegendentry{$\snrx=\SI{-5}{\decibel}, \snry=\SI{10}{\decibel}$};
\end{axis}
\end{tikzpicture}
	\caption{Achievable zero-outage capacities $\zerorate$ for two dependent Nakagami-$m$ fading channels with different \gls{snr} values $\snrx$ and $\snry$ and $m=5$. The two channels follow the copula $C_t$ defined in Remark~\ref{rem:ambiguity-copula-t}.
	}
	\label{fig:mrc-naka-zoc-copula-t}
\end{figure}
\begin{figure}
	\centering
	\begin{tikzpicture}%
\begin{axis}[
	width=.98\linewidth,
	height=.25\textheight,
	xlabel={$n$},
	ylabel={Maximum ${\zerorate_n}$},
	xmin=2,
	xmax=10,
	legend pos=north west,
	ymajorgrids,
	xmajorgrids,
	xminorgrids,
	grid style={line width=.1pt, draw=gray!20},
	major grid style={line width=.25pt,draw=gray!40},
]	
	\addplot[plot1,thick,mark=pentagon*] table[x=n,y=zocOuterW] {data/zoc-mrc-naka5-snr0.dat};
	\addlegendentry{Outer Bound from Theorem~\ref{thm:mrc-outer-bound-zoc-n-links-w}};
	
	\addplot[plot3,thick,mark=square*] table[x=n,y=zocOuterJM] {data/zoc-mrc-naka5-snr0.dat};
	\addlegendentry{Outer Bound from Corollary~\ref{cor:mrc-outer-bound-zoc-n-links-joint-mix}};
	
	\addplot[plot2,thick,mark=triangle*] table[x=n,y=zocInner] {data/zoc-mrc-naka5-snr0.dat};
	\addlegendentry{Inner Bound from Theorem~\ref{thm:mrc-inner-bound-zoc-n-links}};
\end{axis}
\end{tikzpicture}
	\caption{Inner and outer bounds on the maximum \gls{zoc} for $n$ Nakagami-$m$ fading links with $m=5$ and $\snr=\SI{0}{\decibel}$.}
	\label{fig:mrc-nakagami-max-zoc-n-links}
\end{figure}
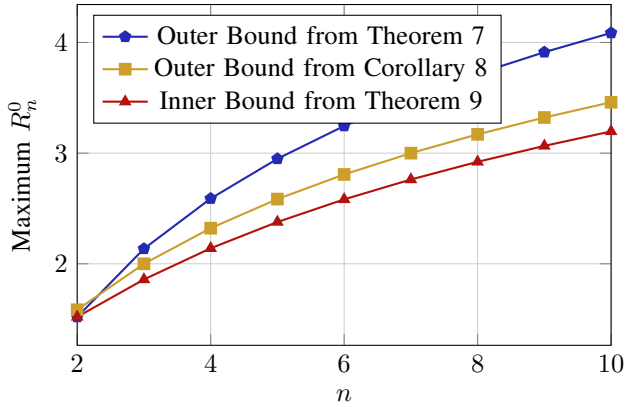

\subsection{Summary}
In this section, we showed that there exists an infinite number of joint fading distributions for which the \gls{zoc} is positive, if the receiver employs \gls{mrc}.
For the two-dimensional scenario, we derived a general expression for the maximum \gls{zoc}, which can also be applied to heterogeneous marginals.
For homogeneous marginals, we extended the results to the case of $n>2$ under some constraints and provide upper and lower bounds on the maximum \gls{zoc} which admit a finite gap.
An overview of the results can be found in Table~\ref{tab:mrc-max-zoc-overview}.
The most general scenario of $n$ heterogeneous marginals remains an open problem at this point.
\begin{table*}[!htb]
	\renewcommand{\arraystretch}{1.25}
	\centering
	\caption{Maximum \Gls{zoc} for Maximum Ratio Combining $L_{\text{MRC}}(\X_1, \dots{}, \X_n) = \sum_{i=1}^{n} \X_i$}
	\label{tab:mrc-max-zoc-overview}
	\begin{tabular}{c|cc|c}
		\toprule
		\multirow{2}{*}{\textbf{}} & \multicolumn{2}{c|}{\textbf{Homogeneous Links}} & \multirow{2}{*}{\textbf{Heterogeneous Links}}\\
		& {\textbf{\boldmath $B$-\nice{} Distribution}} & \textbf{General} & \\
		\midrule
		$\mathbf{n=2}$ & $\displaystyle \log_2\left(1 + 2\median(\X)\right)$ & See Heterogeneous Links & $\displaystyle \log_2\left(1 + x^\star + \inv{F_{\X_2}}(1-F_{\X_1}(x^\star))\right)$\\
		$\mathbf{n>2}$ & $\displaystyle \overline{\zerorate_n} - \underline{\zerorate_n} \leq \log_2\left(\frac{\expect{\X}}{\inv{F_{\X}}\left(\frac{1}{\e}\right)}\right)$ & Open Problem & Open Problem\\
		\bottomrule
	\end{tabular}
\end{table*}
\section{Selection Combining}\label{sec:sc}

In this section, we will consider \gls{sc} as another diversity combining scheme. In this case, only the strongest link is selected at the receiver~\cite{Brennan1959}. The combining function $L$ is therefore given as
\begin{equation*}
	L_{\text{SC}}(\X_1, \dots{}, \X_n) = \max \left\{\X_1, \dots{}, \X_n\right\}\,.
\end{equation*}

Analogue to Section~\ref{sec:mrc}, we start with the two-dimensional case and extend it to the general $n$-dimensional case. We also evaluate the results for the examples of Rayleigh fading and Nakagami-$m$ fading.

\subsection{Two-Dimensional Case}
Similar to \gls{mrc}, the maximum \gls{zoc} for \gls{sc} with two dimensions is achieved for countermonotonic $\X_1$ and $\X_2$~\cite{Frank1987}.
The optimization problem \eqref{eq:opt-problem-boundary} can therefore be written as
\begin{equation}
	\min_{F_{\X_1}(x_1)+F_{\X_2}(x_2)=1} \max \left\{x_1, x_2\right\}\,.
\end{equation}
With the substitutions $x_1 = \inv{F_{\X_1}}(p)$ and $x_2 = \inv{F_{\X_2}}(1-p)$, we rewrite the problem as
\begin{equation}
	\min_{p\in[0, 1]} \max \left\{\inv{F_{\X_1}}(p), \inv{F_{\X_2}}(1-p)\right\}\,.
\end{equation}
Recall that the quantile function $\inv{F_{\X}}$ is an increasing function. Combined with the assumption that all $\X_i$ are supported on the non-negative real numbers, we can derive that the minimum is attained at $p^\star$ for which
\begin{equation}
	\inv{F_{\X_1}}(p^\star) = \inv{F_{\X_2}}(1-p^\star)
\end{equation}
holds.
The maximum \gls{zoc} is then given as
\begin{equation}\label{eq:sc-max-zoc-2links-hetero}
	\zerorate = \log_2\left(1 + \inv{F_{\X_1}}(p^\star)\right)\,.
\end{equation}

In the case of homogeneous marginals, i.e., $F_{\X_1}=F_{\X_2}=F_{\X}$, we get $p^\star=0.5$. The \gls{zoc} from \eqref{eq:sc-max-zoc-2links-hetero} can then be simplified to $\zerorate = \log_2\left(1 + \median(\X)\right)$.

\subsection{General Case}
For the extension to the $n$-dimensional case, first, recall the general problem formulation from~\eqref{eq:def-prob-eps-outage-capac}.
We are interested in the case that the outage probability is zero, i.e., $\Pr(L(\X_1, \dots{}, \X_n)<s)=0$.

For homogeneous marginals $F_{\X_1}=\cdots{}=F_{\X_n}=F_{\X}$, we determine the exact solution to \eqref{eq:def-prob-eps-outage-capac} as follows.
From \cite{Lai1978}, we know that there always exists a \emph{maximally dependent} coupling of $(\X_1, \dots{}, \X_n)$ such that
\begin{equation}
	\Pr\left(\max (\X_1, \dots{}, \X_n) < \inv{F_{\X}}\left(1-\frac{1}{n}\right)\right) = 0\,,
\end{equation}
if all $\X_i$ are identically distributed.
This can now be used to solve the optimization problem~\eqref{eq:def-prob-eps-outage-capac} and find the maximum \gls{zoc} for \gls{sc} at the receiver as
\begin{equation}\label{eq:sc-max-zoc-n-links-homo}
	\zerorate_n = \log_2\left(1 +\inv{F_{\X}}\left(1-\frac{1}{n}\right)\right)\,.
\end{equation}

In order to extend the above to the general $n$-dimensional case with heterogeneous marginal distributions, we leverage the results derived in \cite{Ruschendorf1980}.
With \cite[Thm.~7]{Ruschendorf1980}, we derive the implicit characterization of the maximum \gls{zoc} $\zerorate_n$ for \gls{sc} as
\begin{equation}\label{eq:sc-max-zoc-n-links-hetero}
	\sum_{i=1}^{n} F_{\X_i}\left(2^{\zerorate_n} - 1\right) = n-1\,,
\end{equation}
where $F_{\X_i}$ corresponds to marginal distributions of $\X_i$, i.e., $\X_i\sim F_{\X_i}$.
Note that this is consistent with the result from \eqref{eq:sc-max-zoc-n-links-homo} for homogeneous marginals, i.e., when $F_{\X_1} = \cdots{} = F_{\X_n} = F_{\X}$.

\subsection{Examples}\label{sub:sc-examples}
We will now evaluate the results for \gls{sc} at the receiver for Rayleigh fading and Nakagami-$m$ fading.
The first example is for two heterogeneous links. The first link is Rayleigh fading, i.e., $\X_1\sim\exp(1/\snrx)$, while the second is Nakagami-$m$ fading, i.e., $\X_2\sim\Gamma(m, \snry/m)$.
The parameters are set to $m=5$ and $\snrx=\snry=\SI{10}{\decibel}$.
The maximum \gls{zoc} is evaluated according to \eqref{eq:sc-max-zoc-2links-hetero}, which yields $p^\star=0.575$, $s^\star_{\text{SC}}=8.554$, and $\zerorate=3.256$.

Next, we consider the homogeneous case with $n\geq 2$. In this case, the maximum \gls{zoc} is given by \eqref{eq:sc-max-zoc-n-links-homo}. For Rayleigh fading, we get
\begin{equation}\label{eq:sc-max-zoc-n-links-rayleigh}
	\zerorate_n = \log_2\left(1 + \snr\log n\right)\,,
\end{equation}
and for Nakagami-$m$ fading, we get
\begin{equation}\label{eq:sc-max-zoc-n-links-nakagami}
	\zerorate_n = \log_2\left(1 + \frac{\snr}{m} \inv{P}\left(m, 1-\frac{1}{n}\right)\right)\,,
\end{equation}
where $\inv{P}(a, z)$ is the inverse of the regularized lower incomplete Gamma function $P(a, z)$~\cite[Eq.~6.5.1]{Abramowitz1972}.
The maximum \glspl{zoc} for these fading types are shown in Fig.~\ref{fig:sc-max-zoc-rayleigh-nakagami}. First, recall that Rayleigh fading is a special case of Nakagami-$m$ fading. It is achieved by setting $m=1$.
From Fig.~\ref{fig:sc-max-zoc-rayleigh-nakagami}, it can be seen that the maximum \gls{zoc} increases for all fading distributions with increasing $n$. However, the maximum \gls{zoc} increases slower for higher $m$. This is due to the shape of the Gamma-distribution. For increasing $m$, the probability mass in the upper tail gets smaller. The changes in the quantile function $\inv{F_{\X}}$ are therefore very small, if we look at probabilities close to one, which is the case for $1-1/n$ with increasing $n$. Therefore the changes in the maximum \gls{zoc} are also small when $m$ is high. However, it should be emphasized that the maximum \gls{zoc} still goes to infinity for $n\to\infty$.
On the other hand, the median increases with increasing $m$ and the maximum \gls{zoc} therefore also increases with $m$ for $n=2$.

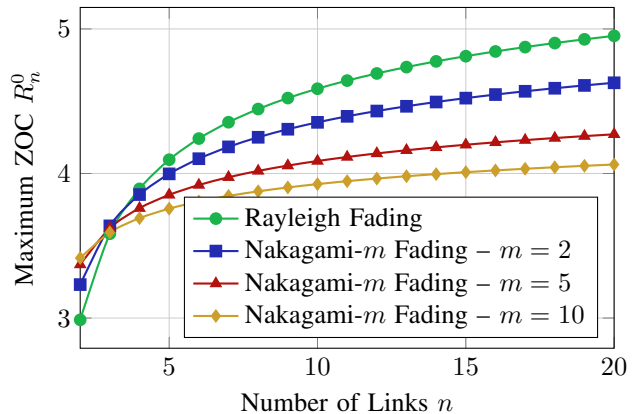
\begin{figure}
	\centering
	\begin{tikzpicture}%
	\begin{axis}[
		width=.98\linewidth,
		height=.25\textheight,
		xlabel={Number of Links $n$},
		ylabel={Maximum ZOC $\zerorate_n$},
		xmin=2, xmax=20,
		legend pos=south east,
		legend cell align=left,
		ymajorgrids,
		xmajorgrids,
		xminorgrids,
		grid style={line width=.1pt, draw=gray!20},
		major grid style={line width=.25pt,draw=gray!40},
	]
		\addplot[plot0,thick,mark=*, smooth] table[x=n,y=zocExp] {data/zoc-SC-snr10-naka2.dat};
		\addlegendentry{Rayleigh Fading};
		
		\addplot[plot1,thick,mark=square*, smooth] table[x=n,y=zocNaka] {data/zoc-SC-snr10-naka2.dat};
		\addlegendentry{Nakagami-$m$ Fading -- $m=2$};
		\addplot[plot2,thick,mark=triangle*, smooth] table[x=n,y=zocNaka] {data/zoc-SC-snr10-naka5.dat};
		\addlegendentry{Nakagami-$m$ Fading -- $m=5$};
		\addplot[plot3,thick,mark=diamond*, smooth] table[x=n,y=zocNaka] {data/zoc-SC-snr10-naka10.dat};
		\addlegendentry{Nakagami-$m$ Fading -- $m=10$};

	\end{axis}
\end{tikzpicture}
	\caption{Maximum \gls{zoc} for $n$ links with Rayleigh and Nakagami-$m$ fading with \gls{snr} $\snr=\SI{10}{\decibel}$ and \gls{sc} at the receiver.}
	\label{fig:sc-max-zoc-rayleigh-nakagami}
\end{figure}

\subsection{Summary}
In this section, we derived expressions for the maximum \gls{zoc}, if the receiver employs \gls{sc}.
An overview of the results can be found in Table~\ref{tab:sc-max-zoc-overview}.
For homogeneous marginals, we derived the maximum \gls{zoc} explicitly for all $n\geq 2$.
For $n$ heterogeneous marginals, we give an implicit characterization of the maximum \gls{zoc}.
\begin{table}[tbh]
	\renewcommand{\arraystretch}{1.25}
	\centering
	\caption{Maximum \Gls{zoc} for Selection Combining $L_{\text{SC}}(\X_1, \dots{}, \X_n) = \max (\X_1, \dots{}, \X_n)$}
	\label{tab:sc-max-zoc-overview}
	\begin{tabular}{c|c|c}
		\toprule
		\textbf{} & \textbf{Homogeneous Links} & \textbf{Heterogeneous Links}\\
		\midrule
		$\mathbf{n=2}$ & {\multirow[b]{2}{*}{$\displaystyle \log_2\left(1 +\inv{F_{\X}}\left(1-\frac{1}{n}\right)\right)$}} & $\displaystyle \log_2\left(1 + \inv{F_{\X_1}}(p^{\star})\right)$\\
		$\mathbf{n>2}$ & & $\displaystyle \sum_{i=1}^{n} F_{\X_i}\!\left(2^{\zerorate_n} - 1\right) = n-1$\\
		\bottomrule
	\end{tabular}
\end{table}
\section{Conclusion}\label{sec:conclusion}

In this work, we investigated the \gls{zoc} for fading channels with a dependency structure. Interestingly, there exist joint distributions for which the \gls{zoc} is strictly positive, without requiring perfect \gls{csit}, i.e., without power control and only with an average power constraint. First, we provide a parameterized description of a dependency structure in form of a copula that achieves positive \glspl{zoc}.
This shows that the set of joint distribution, for which the \gls{zoc} is positive, is not a singleton. Next, we investigated the maximum \gls{zoc} over all joint distributions with given marginals. 
In the homogeneous case, i.e., all marginal distributions are the same, we provide an explicit expression for the maximum \gls{zoc} when \gls{sc} is used at the receiver as diversity combining technique. For \gls{mrc} at the receiver, we derive upper and lower bounds on the maximum \gls{zoc}. The gap between these bounds is finite for marginal distributions that fulfill certain mild requirements. For heterogeneous marginals, we describe the solution in the case of $n=2$. The general $n$-dimensional case with heterogeneous marginal distributions remains an open problem at this moment.

This work gives a theoretical analysis and shows how dependency control among random variables can enable ultra-reliable communications, which is of particular interest for \gls{urllc}. In future work, it will be necessary to address ways of implementing active dependency control in real communication systems. Key technologies enabling this might be smart relaying~\cite{Wang2008} or \glspl{ris}~\cite{DiRenzo2019}.
If a future technology allows a flexible tuning of the radio environment, it might also be interesting to relax the constraint of fixed marginals.
Furthermore, we will consider different forms of available \gls{csi} at the transmitter in future work.

\appendices
\section{Proof of Corollary~\ref{cor:max-zoc-countermonotonic}\label{app:proof-cor-max-zero-out}}
It follows from \cite[Thm.~3.2]{Frank1987} that for a given rate, the lowest-possible outage probability $\varepsilon$ is achieved for a joint distribution following the copula~\cite[Eq.~(3.4)]{Frank1987}
\begin{equation}
	C_{\varepsilon}(a, b) = \begin{cases}
		\max \left[a + b - 1, \varepsilon\right] & \text{if}~(a, b) \in [\varepsilon, 1]^2\\
		M(a, b) & \text{otherwise.}
	\end{cases}%
\end{equation}
For some further details also see \cite{Embrechts2003,Besser2020part2}.
This dependency structure, therefore, describes the upper bound on the $\varepsilon$-outage capacity for dependent fading channels.
Since we are interested in the maximum \emph{zero}-outage capacity, we set $\varepsilon=0$ and get $C_{0}(a, b)=W(a,b)$, i.e., the Fr\'echet-Hoeffding lower bound, as optimal dependency structure.

\section{Proof of Theorem~\ref{thm:mrc-achievable-zero-out-two-links}\label{app:proof-thm-achievable-zoc}}
Let the joint distribution of $\X_1$ and $\X_2$ be determined by the copula
\begin{equation}\label{eq:copula-frank}
C_t(a, b) = \begin{cases}
\max\left[a + b - t, 0\right] & \text{if } (a, b)\in [0, t]^2\\
M(a, b) & \text{otherwise}
\end{cases}
\end{equation}
with $t\in[0, 1]$. The copula is also shown in Fig.~\ref{fig:copula-frank}. For $t=0$ and $t=1$, $C_t$ is equal to the Fr\'{e}chet-Hoeffding upper and lower bounds, respectively.
Observe that $C_t(a, b)=0$ if $a+b\leq t$. Since we are interested in the area $\mathcal{S}$, cf.~\eqref{eq:outage-prob-integral}, we need to transform the area from the copula space into the space of {$x_1$} and {$x_2$}.
From Sklar's theorem, we know that $F_{\X_1, \X_2}(x, y)=0$, if $F_{\X_1}(x_1)+F_{\X_2}(x_2)\leq t$. We will refer to this area as $\mathcal{B}$ as defined in \eqref{eq:def-area-b} in the following. %

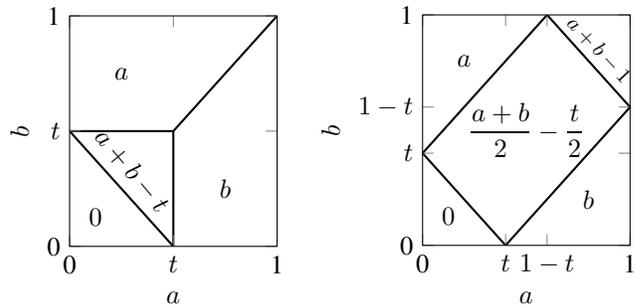
\begin{figure}[!t]
	\centering
	\subfloat[{Copula from \cite[Chap.~3.2.2]{Nelsen2006}}]{\begin{tikzpicture}
\begin{axis}[
	width=0.49\linewidth, %
	height=.19\textheight,
	xlabel={$a$},
	ylabel={$b$},
	domain=0:1,
	xmin=0,
	xmax=1,
	ymin=0,
	ymax=1,
	axis on top,
	xtick={0, 0.5, 1},
	ytick={0, 0.5, 1},
	xticklabels={0, $t$, 1},
	yticklabels={0, $t$, 1},
]

\addplot[thick, black] {-x+0.5};
\addplot[thick, black] coordinates {(0, 0.5) (0.5, 0.5)};
\addplot[thick, black] coordinates {(0.5, 0) (0.5, 0.5)};
\addplot[thick, black] coordinates {(0.5, 0.5) (1, 1)};

\node[anchor=center, text=black] at (axis cs: 0.125,0.125) {$0$};
\node[anchor=center, text=black, rotate={-50}] at (axis cs: 0.3,0.33) {$a+b-t$}; %
\node[anchor=center, text=black] at (axis cs: 0.25,0.75) {$a$};
\node[anchor=center, text=black] at (axis cs: 0.75,0.25) {$b$};

\end{axis}
\end{tikzpicture}
		\label{fig:copula-frank}
	}
	\hfil
	\subfloat[Generalized Circular Copula]{\begin{tikzpicture}
\begin{axis}[
	width=0.49\linewidth, %
	height=.19\textheight,
	xlabel={$a$},
	ylabel={$b$},
	domain=0:1,
	xmin=0,
	xmax=1,
	ymin=0,
	ymax=1,
	axis on top,
	xtick={0, 0.4, 0.6, 1},
	ytick={0, 0.4, 0.6, 1},
	xticklabels={0, $t$, $1-t$, 1},
	yticklabels={0, $t$, $1-t$, 1},
]

\addplot[thick, black] {x+0.4};
\addplot[thick, black] {x-0.4};
\addplot[thick, black] {-x+0.4};
\addplot[thick, black] {-x+1.6};
\node[anchor=center, text=black] at (axis cs: 0.125,0.125) {$0$};
\node[anchor=center, text=black, rotate={-47}, scale=.8] at (axis cs: 0.85,0.85) {$a+b-1$}; %
\node[anchor=center, text=black] at (axis cs: 0.2,0.8) {$a$};
\node[anchor=center, text=black] at (axis cs: 0.8,0.2) {$b$};
\node[anchor=center, text=black] at (axis cs: 0.5,0.5) {$\displaystyle\frac{a+b}{2}-\frac{t}{2}$};

\end{axis}
\end{tikzpicture}
		\label{fig:copula-circular}
	}
	\caption{Two different copulas with parameter $t$ which allow achieving a positive \gls{zoc}.}\label{fig:copula-examples}
\end{figure}

We denote the boundary of $\mathcal{B}$ as $B_t(x)=\inv{F_{{\X_2}}}(t-F_{{\X_1}}(x))$. Since we are looking for a tangent with slope \num{-1}, we have the following condition
\begin{equation}
	\frac{\partial B_t}{\partial {x_1}} = \frac{-f_{{\X_1}}(x)}{f_{{\X_2}}\left(\inv{F_{{\X_2}}}(t-F_{{\X_1}}(x))\right)} \stackrel{!}{=} -1 \label{eq:condition-x-star-slope-boundary}\,,
\end{equation}
where we set $u=t-F_{{\X_1}}({x_1})$ and apply the well-known rules for derivatives.
The solutions to \eqref{eq:condition-x-star-slope-boundary} are denoted as $x^{(i)}$.
Recall that we are looking for lines in the form of $x_1+x_2=s$ and that the above condition gives these tangents on $B$ at $x^{(i)}$. The next candidate solutions are therefore in the form $x_1+x_2=x^{(i)}+B_t(x^{(i)})$. An exemplary $\mathcal{B}$ can be seen in Fig.~\ref{fig:area-bounds-xy-space}. It is easy to see that there exist multiple solutions $x^{(i)}$ to \eqref{eq:condition-x-star-slope-boundary}. However, only one of the lines $x^{(i)}+B(x^{(i)})$ lies completely in $\mathcal{B}$. For the optimal solution $x^\star$, we, therefore, need to take the minimum over all $x^{(i)}$, i.e.,
$x^\star = \argmin_{x^{(i)}} \left\{x^{(i)} + B(x^{(i)})\right\}$.

If the boundary $\mathcal{B}$ is not convex or it has no slope of \num{-1} in the first quadrant of the $x_1$-$x_2$ plane, we need to consider the limit points, which are given at $x_1=0$ and $x_2=0$ and we have $x_2\leq\inv{F_{\X_2}}(t)$ and $x_1\leq\inv{F_{\X_1}}(t)$, respectively. Recall that the original problem $x_1+x_2\leq s$ can be viewed as finding the line $x_2=s-x_1$ with the maximum $s$ such that the line is in $\mathcal{B}$.
With the first bounds on $x_1$ and $x_2$, we have the simple lines $x_1+x_2=\inv{F_{\X_1}}(t)$ and $x_1+x_2=\inv{F_{\X_2}}(t)$ as possible candidates.
This is illustrated in Fig.~\ref{fig:area-bounds-xy-space}.
From this exemplary plot, it is also easy to see that both lines do not fully lie in $\mathcal{B}$.
Thus, there is a positive probability mass in the areas $x_1+x_2\leq\inv{F_{\X_1}}(t)$ and $x_1+x_2\leq\inv{F_{\X_2}}(t)$.
In this case, we need to find the tangent on the boundary of $\mathcal{B}$.
We now have the following possible candidates for $s$
\begin{equation*}
	s_1 = x^\star+B_t(x^\star),\qquad
	s_2 = \inv{F_{{\X_1}}}(t),\qquad
	s_3 = \inv{F_{{\X_2}}}(t).
\end{equation*}
Recall that $s=2^{\zerorate}-1$ with the \gls{zoc} $\zerorate$ when the joint distribution of $\X_1$ and $\X_2$ follow copula $C_t$.
With reference to Fig.~\ref{fig:area-bounds-xy-space}, it becomes clear that we need to take the minimum of all $s_i$ in order to guarantee that $x_1+x_2\leq s$ is a subset of $\mathcal{B}$. %
Combining all of the above finally states the theorem.

\begin{figure}
	\centering
	\begin{tikzpicture}
\begin{axis}[
	width=.96\linewidth,
	height=.25\textheight,
	xlabel={$x_1$},
	ylabel={$x_2$},
	domain=0:6,
	xmin=0,
	xmax=2.75,
	ymin=0,
	ymax=2.75,
	axis on top,
	xtick={0, 1.5, 2.242},
	xticklabels={0, $B_t(x^\star)+x^\star$, $\inv{F_{\X_1}}(t)$},
	ytick={0, 2},
	yticklabels={0, $\inv{F_{\X_2}}(t)$},
]

\addplot[plot4, thick, fill=plot4, fill opacity=0.2, area legend, samples=100] {.5*sin(deg(4*x))-x+2}\closedcycle;
\addlegendentry{$F_{\X_1}(x_1)+F_{\X_2}(x_2)\leq t$};
\node[anchor=south west, text=plot4] at (axis cs: 0.4,0.5) {$\mathcal{B}$};

\addplot[plot2, thick] {2-x};
\addlegendentry{$x_1+x_2 = \inv{F_{\X_2}}(t)$};

\addplot[plot3, thick] {2.242-x};
\addlegendentry{$x_1+x_2 = \inv{F_{\X_1}}(t)$};

\addplot[black, dashed, thick] {2.5-x};
\addlegendentry{$x_1+x_2 = B_t(x^{(1)})+x^{(1)}$};

\addplot[black, thick] {1.5-x};
\addlegendentry{$x_1+x_2 = B_t(x^\star)+x^\star$};

\end{axis}
\end{tikzpicture}
	\caption{Area with zero probability mass of the joint distribution $F_{\X_1, \X_2}$ and different candidates for corresponding \gls{zoc}.}
	\label{fig:area-bounds-xy-space}
\end{figure}
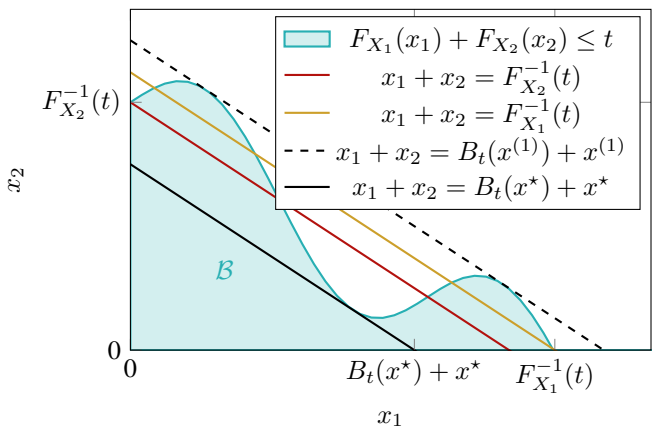

\section{Proof of Lemma~\ref{lem:dist-cond-w}\label{app:proof-lem-dist-cond-w}}
We start with the following observation.
Since we want to minimize the sum $\sum_{i=1}^{n}x_i$, we are interested in the tangent planes $\sum_{i=1}^{n}x_i=s$ on $B(\vec{x})=0$.
The condition for the critical points is therefore given by
\begin{equation}
	\frac{\partial B}{\partial x_i} = \frac{\partial B}{\partial x_j}, \quad \forall i, j\,.
\end{equation}
For the considered $B$, this means that
\begin{equation}\label{eq:cond-same-derivative}
	f(x_i) = f(x_j), \quad \forall i, j \,.
\end{equation}
First, it is easy to see from this, that there will always be a critical point on the identity line $x_1=\cdots{}=x_n$ that solves \eqref{eq:cond-same-derivative}.
If the density $f$ is strictly decreasing, e.g., the exponential distribution, it has an inverse function and the point on the identity line will be the only solution \eqref{eq:cond-same-derivative}. For such distributions, the proof is concluded.

For general distributions with quasiconcave densities, we next check the curvature of $B$ at the point on the identity line that we are interested in. If it is negative, this corresponds to a maximum of the curve and is \emph{not} a solution of \eqref{eq:opt-problem-nice-dist}. If we have a minimum on the identity line, it is a valid candidate to be the solution to \eqref{eq:opt-problem-nice-dist}.

With the implicit function theorem~\cite[Sec.~8.3]{Binmore2002}, we will express the implicit function $B(x_1, \dots{}, x_n)=0$ by an explicit function $b(x_1, \dots{}, x_{n-1})$ as
\begin{equation}
	B(x_1, \dots{}, x_n) = 0 \Leftrightarrow b(x_1, \dots{}, x_{n-1}) = x_n \,.
\end{equation}
From the implicit function theorem, we can determine the gradient of $b$ as
\begin{equation}
	\frac{\partial b}{\partial x_i} = -\inv{\left(\frac{\partial B}{\partial x_n}\right)} \frac{\partial B}{\partial x_i}, \quad i=1, \dots{}, n-1\,,
\end{equation}
and the second derivatives, i.e., the Hessian matrix, whose entries are then given by %
\begin{equation}\label{eq:second-deriv-b}
	\frac{\partial^2 b}{\partial x_i \partial x_j} = -{\left(\frac{\partial B}{\partial x_n}\right)}^{-2}\left(\frac{\partial^2 B}{\partial x_i \partial x_j} \frac{\partial B}{\partial x_n} - \frac{\partial B}{\partial x_i} \frac{\partial^2 B}{\partial x_n \partial x_j} \right)\,.
\end{equation}
For the considered $B$, this gives
\begin{equation}\label{eq:second-deriv-main-diag-b-w}
	\frac{\partial^2 b}{\partial x_i^2} = -\frac{f'(x_i)}{f(x_n)}
\end{equation}
and
\begin{equation}
	\frac{\partial^2 b}{\partial x_i \partial x_j} = 0\,, \quad i\neq j,
\end{equation}
where $f'$ is the derivative of the density.
From this, it can be seen that the Hessian matrix of $B$ is a diagonal matrix and its eigenvalues are therefore the entries on the main diagonal which are given by \eqref{eq:second-deriv-main-diag-b-w}. In order to have a minimum on the identity line, the Hessian needs to be positive definite at this point, i.e., all of its eigenvalues need to be positive.
From \eqref{eq:second-deriv-main-diag-b-w}, we see that this is the case if
\begin{equation}
	f'(x^\star) = f'\left(\inv{F}\left(1-\frac{1}{n}\right)\right) < 0\,,
\end{equation}
which is exactly \eqref{eq:cond-nice-w}.

We have shown that there is a local minimum on the identity line for distributions that fulfill the above condition, and we now show that this is then also a global minimum.
For this, recall that all other critical points are given by \eqref{eq:cond-same-derivative}.
For the general case of a strictly quasiconcave density, we know that there are two solutions to $f(x) = \alpha$
for each level set $\alpha$, due to the unimodality.
One of the solutions is greater than the mode while the other is less than the mode.
If all $x_i$ are on the same side from the mode, they are all equal and this is, therefore, the point on the identity line that we already considered.
We, therefore, have to check the remaining solutions, i.e., when at least two $x_i$ are on the different sides of the mode.
In this case, the $x_i$ that is less than the mode is on the increasing part of the density, i.e., $f'(x_i)>0$, while $f'(x_j)<0$, when $x_j$ is greater than the mode.
With \eqref{eq:second-deriv-main-diag-b-w}, we can therefore conclude that the Hessian matrix is indefinite at all remaining critical points, i.e., they are saddle points.

Since we assume unbounded support of $F_{\X}$, the range of each $x_i$ is $[0, \infty)$.
Therefore, if $x_i\to\infty$ for any $i$, the sum $x_1 + \cdots{} + x_n$ also tends to $\infty$.
In other words, the function is coercive~\cite[Def.~2.31]{Beck2014} with respect to the sum.

We can summarize the above discussion as follows. A distribution that fulfills condition \eqref{eq:cond-nice-w} from the lemma has a unique minimum of the sum on the identity line. All other stationary points are saddle points.
Combining this with the coerciveness, we can conclude that the minimum on the identity line is global.

\section{Proof of Lemma~\ref{lem:dist-cond-arch-copula}\label{app:proof-lem-dist-cond-arch-copula}}
The proof of Lemma~\ref{lem:dist-cond-arch-copula} follows the same idea as the proof of Lemma~\ref{lem:dist-cond-w}.
However, we now consider
$B(x_1,\dots{},x_n) = \sum_{i=1}^{n}\left(F_{\X}(x_i)\right)^{\frac{1}{n-1}}-n+1$.
The first derivatives of $B$ are given as
\begin{equation}
	\frac{\partial B}{\partial x_i} = g(x_i) = \frac{1}{n-1} \left(F(x_i)\right)^{\frac{2-n}{n-1}}f(x_i)\,.
\end{equation}
Thus, the critical points for the tangent plane are now given by the condition
\begin{equation}
	\left(F(x_i)\right)^{\frac{2-n}{n-1}}f(x_i) = \left(F(x_j)\right)^{\frac{2-n}{n-1}}f(x_j), \quad \forall i, j \,.
\end{equation}

The second partial derivatives of $B$ are given as
\begin{equation}\label{eq:second-deriv-main-diag-b-arch-copula}
	\frac{\partial^2 B}{\partial x_i^2} = g'(x_i) = \frac{\left(F(x_i)\right)^{\frac{2-n}{n-1}}}{n-1} \left(f'(x_i) + \frac{2-n}{n-1} \frac{\left(f(x_i)\right)^2}{F(x_i)}\right)
\end{equation}
and
\begin{equation}
	\frac{\partial^2 B}{\partial x_i \partial x_j} = 0, \quad i\neq j \,.
\end{equation}
Combining this with \eqref{eq:second-deriv-b}, we can see that the Hessian matrix again is a diagonal matrix, where the entries on the main diagonal are given by \eqref{eq:second-deriv-main-diag-b-arch-copula}.
Similarly to the proof of Lemma~\ref{lem:dist-cond-w}, we have a minimum on the identity line, if the Hessian matrix is positive definite at this point.
From the condition $B(x_1,\dots{},x_n)=0$, we get the point on the identity line as
\begin{equation*}
	x_1 = \dots{} = x_n = x^\star = \inv{F}\left(\left(1-\frac{1}{n}\right)^{n-1}\right)\,.
\end{equation*}
All of the eigenvalues of the Hessian are positive if
\begin{equation}
	f'(x^\star) < -\frac{2-n}{n-1} \left(1-\frac{1}{n}\right)^{1-n} f(x^\star)^2\,.
\end{equation}
This is condition \eqref{eq:cond-nice-arch-copula} from the lemma to prove.

By the assumption in the lemma, $g(x)$ is a unimodal function.
We can therefore use the same argument as in the proof of Lemma~\ref{lem:dist-cond-w} in App.~\ref{app:proof-lem-dist-cond-w}.
With the above observations and reference to App.~\ref{app:proof-lem-dist-cond-w}, this concludes the proof.

\section{Proof of Theorem~\ref{thm:mrc-outer-bound-zoc-n-links-w}\label{app:proof-thm-mrc-outer-w}}
The general idea of the proof is similar to the one of Theorem~\ref{thm:mrc-achievable-zero-out-two-links} for two links.
However, we need to do the following adjustments.
Recall that we want to find the hyperplane of the form
\begin{equation}\label{eq:def-hyperplane-n-links}
	\sum_{i=1}^{n} x_i = s,
\end{equation}
which touches $\mathcal{B}$ given by
$\sum_{i=1}^{n}F_{\X_i}(x_i) + 1 - n = 0$.
Since we assume that the distribution is $B$-\nice{}, we know this tangent point $x^\star$ will be on the identity line, i.e., $x^\star = x_1 = \dots{} = x_n$.
We therefore get
$x^\star = \inv{F_{\X}}\left(\frac{n-1}{n}\right)$,
and applying this to \eqref{eq:def-hyperplane-n-links} gives the corresponding $s^\star$ as
$s^\star = n x^\star = n\inv{F_{\X}}\left(\frac{n-1}{n}\right)$.
With the definition of $s$, we get the statement \eqref{eq:mrc-outer-bound-zoc-n-links-w} from the theorem.

\section{Proof of Theorem~\ref{thm:mrc-inner-bound-zoc-n-links}\label{app:proof-thm-mrc-inner}}
The proof is basically the same as the one of Theorem~\ref{thm:mrc-outer-bound-zoc-n-links-w}. The only difference is that we use the lower bound on the Archimedean copulas~\cite[Prop.~4.6]{McNeil2009} (also see \cite[Prop.~2]{Lee2014})
\begin{equation*}
	C_n(u_1,\dots{},u_n) = \left(\positive{\sum_{i=1}^{n} u_i^{\frac{1}{n-1}}-n+1}\right)^{n-1}\,.
\end{equation*}
Since this is a valid copula, the derived bound in \eqref{eq:mrc-inner-bound-zoc-n-links} is achievable and therefore an inner bound on the maximum \gls{zoc}.

\section{Proof of Corollary~\ref{cor:mrc-max-gap-inner-outer-bounds-n-links}\label{app:proof-thm-mrc-max-gap}}
The difference of $\overline{\zerorate_n}$ in \eqref{eq:mrc-outer-bound-zoc-n-links-joint-mix} and $\underline{\zerorate_n}$ in \eqref{eq:mrc-inner-bound-zoc-n-links} can be written as
\begin{equation*}
	\overline{\zerorate_n} - \underline{\zerorate_n} = \log_2\left(\frac{1 + n\expect{\X}}{1 + n\inv{F_{\X}}\left(\left(1-\frac{1}{n}\right)^{n-1}\right)}\right)\,.
\end{equation*}

The gap increases with increasing $n$ and is therefore upper bounded by
\begin{equation*}
	\lim_{n\to\infty} \overline{\zerorate_n} - \underline{\zerorate_n} = \log_2\left(\frac{\expect{\X}}{\inv{F_{\X}}(\frac{1}{\e})}\right)\,,
\end{equation*}
which is \eqref{eq:mrc-max-gap-inner-outer-bounds-n-links}.

\printbibliography

\end{document}